\newtheorem{thm}{Theorem}
\newtheorem{defn}{Definition}
\newtheorem{lemma}{Lemma}
\newtheorem{corollary}{Corollary}
\theoremstyle{definition}
\newtheorem{example}{Example}
\newenvironment{claim}[1]{\par\noindent\emph{Claim:}\space#1}{}
\newenvironment{claimproof}[1]{\par\noindent\emph{Proof of claim:}\space#1}{\hfill $\blacksquare$}
\newenvironment{corPrime}[1]			
  {%
   \addtocounter{corollary}{-1}%
   \begin{corollary}}
  {\end{corollary}}
\newcommand\numberthis{\addtocounter{equation}{1}\tag{\theequation}}
\newcommand{\mket}[1]{\Vert #1 \rrangle}
\DeclareMathOperator{\tr}{Tr}
\newcommand{\rom}[1]{%
  \textup{\uppercase\expandafter{\romannumeral#1}}%
}
\begin{document}

\preprint{APS/123-QED}

\title{Characterization of Gram matrices of multi-mode coherent states}

\author{Ashutosh S Marwah}
 \email{ashutosh.marwah@uwaterloo.ca}
\author{Norbert L\"utkenhaus}%
 \email{nlutkenhaus@uwaterloo.ca}
\affiliation{%
 Institute for Quantum Computing, Department of Physics and Astronomy, University of Waterloo, 200 University Ave W, Waterloo, ON, Canada N2L 3G1
}%
\date{\today}

\begin{abstract}
Quantum communication protocols are typically formulated in terms of abstract qudit states and operations, leaving the question of an experimental realization open. Direct translation of these protocols, say into single photons with some d-dimensional degree of freedom, are typically challenging to realize. Multi-mode coherent states, on the other hand, can be easily generated experimentally. Reformulation of protocols in terms of these states has been a successful strategy for implementation of quantum protocols. Quantum key distribution and the quantum fingerprinting protocol have both followed this route. In this paper, we characterize the Gram matrices of multi-mode coherent states in an attempt to understand the class of communication protocols, which can be implemented using these states. As a side product, we are able to use this characterization to show that the Hadamard exponential of a Euclidean distance matrix is positive semidefinite. We also derive the closure of the Gram matrices, which can be implemented in this way, so that we also characterize those matrices, which can be approximated arbitrarily well using multi-mode coherent states. Using this we show that Gram matrices of mutually unbiased bases cannot be approximated arbitrarily well using multi-mode coherent states.
\end{abstract}

\maketitle

\section{Introduction}

It has been shown that the use of quantum resources offers significant qualitative and quantitative advantages over classical communication for several problems. Quantum cryptographic protocols are a prominent example of the qualitative advantage, while numerous protocols demonstrating the quantitative advantage in terms of communication or information complexity have been developed \cite{Buhrman98, Buhrman01, Yossef04, Gavinsky07}. The theoretical description of these protocols is usually given in terms of d- dimensional quantum systems, or qudits. However manipulation and control of these systems still remains a challenge. Due to this reason the question of experimental implementation of these protocols is also left open. The most successfully implemented quantum protocols are ones which have been reformulated in terms of coherent states, which can be produced using lasers, and linear optics. Widely celebrated quantum key distribution \cite{Grosshans02, Grosshans03, Silberhorn02} and quantum fingerprinting protocols \cite{Arrazola14, Xu15, Guan16} fall in this category, and as always there is a great impetus towards realizing more protocols using these tools \cite{Amiri17, Guan18, Arrazola16}. \\

Consider the quantum fingerprinting protocol as an example. Fingerprinting is a problem in communication complexity. Two parties have to check whether the strings they hold are equal or not in the absence of shared randomness and with as little communication as possible. In 2001, Buhrman et al. gave a quantum protocol for this problem \cite{Buhrman01}, which required exponentially less communication than the optimal classical protocol for this task. It was not until 2015 \cite{Xu15} though, that the protocol was implemented experimentally. This implementation became possible after the protocol had been reformulated such that it required only coherent states and linear optics, in a way which also preserved the communication cost \cite{Arrazola14}. The original protocol mapped the set of n-bit strings, $\lbrace s_i \rbrace_{i=1}^{2^n}$ to qudit states, $\lbrace \ket{\psi_i} \rbrace_{i=1}^{2^n} \subset \mathcal{H}_q$, such that for $i \neq j$, the overlap, $\vert \braket{\psi_i | \psi_j} \vert \leq \delta$ for some $\delta$, and $\dim (\mathcal{H}_q) = O (n)$. Using this mapping, they showed that one could decide if two strings were equal or unequal by communicating only $O(\log n)$ qubits. Arrazola et al. \cite{Arrazola14} showed that one could instead choose multi-mode coherent states which satisfy the requirements on the overlap for the protocol. This naturally leads us to ask under what conditions can the vectors forming a Gram matrix be chosen as multi-mode coherent states, so that a protocol implemented by them may be run using coherent states. \\

This is the question we answer in this paper. We characterize the set of Gram matrices of coherent states and their closure. Not only will such a characterization help us in reformulating quantum protocols in terms of coherent states, but it will also shed light on the fundamental properties of sets of coherent states, which are the most classical states of light \cite{Gerry04, Hillery85, Hudson74}. A Gram matrix of a set of quantum states encodes the information about their orientation relative to each other. For this reason we can very often exchange one set of states with another in a protocol if their Gram matrices are the same, though the measurements also need to be changed correspondingly. Additionally, the set of states attainable by applying physical tranformations on an initial set of states also depends on the Gram matrix of the initial set of states \cite{Chefles04, Chefles00}. Therefore, our work also characterizes all the set of states attainable from a set of multi-mode coherent states. \\

The paper is organised as follows. In section \ref{sec:notation}, we establish the notation for the paper. In Section \ref{sec:Gchar}, we completely characterize the set of Gram matrices, which can be constructed using multi-mode coherent states. This result is stated as Theorem \ref{thm:GcharNew}. We provide a test to check if a matrix belongs in this set. We show that the Hadamard exponential of Euclidean distance matrices can be written as a Gram matrix of multi-mode coherent states, which proves that they are positive semidefinite. Moreover in Section \ref{sec:Gclosure}, we derive the closure of this set to characterize the Gram matrices, which can be approximated arbitrarily well using multi-mode coherent states. This characterization is presented in Theorem \ref{thm:closureG}.

\section{Notation} \label{sec:notation}
In this paper, vectors will be denoted by alphabets, and should be identified by their spaces. For example, $v \in \mathbb{C}^n$ denotes a vector. Fock states will be represented using a Latin alphabet inside a ket, and coherent states using Greek alphabets inside kets. For example $\ket{n}$ is a $n-$photon Fock state, and $\ket{\alpha}$ is a coherent state with complex amplitude $\alpha$. Recall that a coherent state is 
\begin{align*}
	\ket{\alpha} = e^{-\vert \alpha \vert^2/2} \sum_{n=0}^\infty \frac{\alpha^n}{\sqrt{n !}} \ket{n}
\end{align*}
for $\alpha \in \mathbb{C}$. A multi-mode coherent state \cite{Werner06} is simply
\begin{align*}
	\bigotimes_{k=1}^n \ket{\alpha_k}
\end{align*}
for $\lbrace \alpha_k \rbrace_{k=1}^n \in \mathbb{C}$. We will use $\mathscr{C}_n$ to denote the set of n-mode coherent states. 
\begin{align*}
	\mathscr{C}_n := \left\lbrace e^{i \phi} \bigotimes_{k=1}^n \ket{\alpha_k} : \ \forall\  1 \leq k \leq n, \ \alpha_k \in \mathbb{C}, \ \phi \in \mathbb{R} \right\rbrace
\end{align*}
This notation can be simplified, by using the map
\begin{align*}
	& \mket{\ \cdot \ } : \ \mathbb{C}^n \longrightarrow \mathscr{C}_n 
\end{align*}
which takes an amplitude vector, ${\alpha = \left( \alpha_1\ \alpha_2\ \cdots\ \alpha_n \right)^\text{T} \in \mathbb{C}^n}$, and creates a multi-mode coherent state using its components, that is,
\begin{align*}
	& \mket{\alpha } := \bigotimes_{k=1}^n \ket{\alpha_k}
\end{align*}
Hence, $\mathscr{C}_n$ can be written as 
\begin{align*}
	\mathscr{C}_n = \left\lbrace e^{i \phi} \mket{\alpha} : \alpha \in \mathbb{C}^n, \ \phi \in \mathbb{R} \right\rbrace.
\end{align*}
During Section \ref{sec:Gchar}, it will be seen that this notation arises naturally. The following notation will be used to denote the standard inner product on a Hilbert Space, $\mathcal{H}$.
\begin{align*}
	& \langle\ \cdot\ ,\ \cdot\ \rangle\ :\ \mathcal{H} \times \mathcal{H} \longrightarrow \mathbb{C}
\end{align*}
As an example, the inner product between two multi-mode coherent states, $\mket{\alpha}$ and $\mket{\beta}$ will be denoted as $\left\langle \mket{\alpha} , \mket{\beta} \right\rangle$. \\

For a matrix, $P$, the notation $P \geq 0$ will be used to indicate that the matrix is positive semidefinite. We will also define a function which takes an n-tuple of vectors and maps them to their Gram matrix.\\

\begin{defn}
	For a Hilbert space, $\mathcal{H}$, we define $G$ to be the function which takes vectors ${ v_1, v_2, \cdots, v_n  \in \mathcal{H}}$ and maps them to their Gram matrix. 
	\begin{align}
		\nonumber & G : \mathcal{H}^n \rightarrow \text{Pos}(\mathbb{C}^n) \\
		& \left(G\left(v_1, v_2, \cdots, v_n\right)\right)_{ij} := \left\langle v_i, v_j\right\rangle \text{ for } 1 \leq i,j \leq n
	\end{align}
\end{defn}
\noindent Given this definition of $G$, we can introduce the notation, 
\begin{align*}
	G(S^n) = \left\lbrace G(v_1, v_2, \cdots, v_n) : v_1, v_2, \cdots, v_n \in S \right\rbrace.
\end{align*}

\noindent where $S \subset \mathcal{H}$ is a set of vectors. Throughout the paper, we consider $\log$ function to represent the logarithm to the base $e$. Lastly, in Table \ref{tab:Notation}, we list the notation for frequently used sets from linear algebra.

\begin{table}
	\centering
	\caption{Notation for frequently used sets}
	\begin{tabular}{ | m{0.2\columnwidth} | m{0.8\columnwidth} | } 
		\hline
		L$(\mathcal{X}, \mathcal{Y})$ & The set of linear operators from complex Euclidean space, $\mathcal{X}$ to complex Euclidean space, $\mathcal{Y}$.\\ 
		\hline
		L$(\mathcal{X})$ & The set of linear operators from complex Euclidean space, $\mathcal{X}$ to itself. \\ 
		\hline
		U$(\mathcal{X}, \mathcal{Y})$ & The set of isometries from complex Euclidean space, $\mathcal{X}$ to complex Euclidean space, $\mathcal{Y}$.\\ 
		\hline
		Herm$(\mathcal{X})$ & The set of Hermitian operators in L$(\mathcal{X})$.\\ 
		\hline
		Pos$(\mathcal{X})$ & The set of positive semidefinite operators in L$(\mathcal{X})$.\\ 
		\hline
		$[n]$ & The set $\lbrace1,2, \cdots , n \rbrace$\\ 
		\hline
	\end{tabular}
	\label{tab:Notation}
\end{table}

\section{Characterization of Gram matrices of multi-mode coherent states} \label{sec:Gchar}
\begin{sloppypar}
In this section, we will prove a theorem which characterizes Gram matrices of multi-mode coherent states. Namely, we will answer the question: can we write a Gram matrix $P$, as ${P=G(e^{i \phi_1} \mket{\alpha_1}, e^{i \phi_2} \mket{\alpha_2}, \ \cdots \ ,e^{i \phi_n} \mket{\alpha_n} )}$? \\

If indeed the matrix $P$ is a gram matrix of multi-mode coherent states, then for every $i,j $
\begin{align*}
	P_{ij} &= \left\langle e^{i \phi_i} \bigotimes_{k=1}^m \ket{\alpha_{ik}},  e^{i \phi_j} \bigotimes_{k=1}^m \ket{\alpha_{jk}} \right\rangle
\end{align*}
for some given number of modes, $ m \in \mathbb{N}$, a set of amplitudes, $\lbrace \alpha_{jk} : j \in [n], k \in [m]\rbrace$ and a set of real phases, $\lbrace \phi_{i} : i \in [n]\rbrace$. Let us further simplify this:
\begin{align*}
				P_{ij} &= e^{i \left( \phi_j -\phi_i \right)} \prod_{k=1}^m \left\langle \ket{\alpha_{ik}}, \ket{\alpha_{jk}} \right\rangle \\
						&= e^{i \left( \phi_j -\phi_i \right)} \prod_{k=1}^m \exp \left( -\frac{1}{2} \left( \vert \alpha_{ik} \vert^2 + \vert \alpha_{jk} \vert^2 - 2 \alpha_{ik}^\ast \alpha_{jk}\right)\right)	\\
						&= \exp \left( i \left( \phi_j -\phi_i \right) -\frac{1}{2} \sum_{k=1}^m \left( \vert \alpha_{ik} \vert^2 + \vert \alpha_{jk} \vert^2 - 2 \alpha_{ik}^\ast \alpha_{jk}\right)\right).
		\end{align*}
Define, the vector, ${\alpha_i := \left( \alpha_{i1} \ \alpha_{i2} \ \cdots\ \alpha_{im}\right)^\text{T} \in \mathbb{C}^m}$, and the vector of phases, ${\phi := \left( \phi_{1} \ \phi_{2} \ \cdots\ \phi_{n}\right)^\text{T} \in \mathbb{R}^n}$ for every ${i \in [n]}$. This naturally gives rise to the notation mentioned earlier for multi-mode coherent states. The inner product can be written as
		\begin{align}
			P_{ij} & = \exp\left(- \frac{1}{2} \left( \Vert \alpha_i \Vert_2^2 + \Vert \alpha_j \Vert_2^2 - 2 \langle \alpha_i, \alpha_j \rangle\right)+  i \left( \phi_j -\phi_i \right) \right)
			\label{eq:ExpMaster1}
		\end{align}
		or equivalently, 
		\begin{align}
			P_{ij}  & = \exp \left( -\frac{1}{2} \Vert \alpha_i - \alpha_j \Vert_2^2 + i\left( \text{Im}\lbrace \langle \alpha_i, \alpha_j \rangle\rbrace + \left( \phi_j -\phi_i \right)\right) \right). 
			\label{eq:ExpMaster2}
		\end{align}
For a particular branch of the $\log$ functions Eqs. \ref{eq:ExpMaster1} and \ref{eq:ExpMaster2} are equivalent to the existence of $\lbrace N_{ij} \rbrace_{i,j} \subset \mathbb{Z}$ for which 
		\begin{align}
			 \nonumber \log (P_{ij}) - i2\pi N_{ij} = - \frac{1}{2} & \left( \Vert \alpha_i \Vert_2^2 +  \Vert \alpha_j \Vert_2^2 - 2 \langle \alpha_i, \alpha_j \rangle\right) \\
			 & \qquad +  i \left( \phi_j -\phi_i \right)
			 \label{eq:logMaster1}
		\end{align}
		or equivalently,
		\begin{align}
			\nonumber \log (P_{ij}) - i2\pi N_{ij} & = -\frac{1}{2} \Vert \alpha_i - \alpha_j \Vert_2^2 \\
			& + i\left( \text{Im}\lbrace \langle \alpha_i, \alpha_j \rangle\rbrace + \left( \phi_j -\phi_i \right) \right)
			\label{eq:logMaster2}
		\end{align}	
holds for every $i,j \in [n]$. \\

In the Lemma that follows, we characterize matrices $Q$ which satisfy 
\begin{align*}
	Q_{ij} = - \frac{1}{2} \left( \Vert \alpha_i \Vert_2^2 + \Vert \alpha_j \Vert_2^2 - 2 \langle \alpha_i, \alpha_j \rangle\right)+  i \left( \phi_j -\phi_i \right)
\end{align*}
for some set of complex vectors $\lbrace \alpha_i \rbrace_i$, and real phases, $\lbrace \phi_i \rbrace_i$. For the statement of this Lemma, we define the vector, $u \in \mathbb{C}^n$ to be the vector of all ones, i.e.,
\begin{align*}
	u := \left( 1\ 1\ \cdots\ 1\right)^T
\end{align*} 
\begin{lemma}
For a matrix, $Q \in \text{L}(\mathbb{C}^n)$, the following are equivalent:
       \begin{enumerate}[(I)]
       \item There exists $m \in \mathbb{N}$, a set of complex vectors ${\lbrace \alpha_{i} : i \in [n] \rbrace \subset \mathbb{C}^m}$, and a set of real phases ${\lbrace \phi_i : i \in [n] \rbrace \subseteq \mathbb{R}}$ such that for $i,j \in [n]$,
       \begin{equation}
      		Q_{ij} = - \frac{1}{2} \left( \Vert \alpha_i \Vert_2^2 + \Vert \alpha_j \Vert_2^2 - 2 \langle \alpha_i, \alpha_j \rangle\right)+  i \left( \phi_j -\phi_i \right).
      		\label{eq:LemmaQstatement1}
       \end{equation}
       \item $Q \in \text{Herm}(\mathbb{C}^n)$, ${Q_{ii} = 0}$ for all $i \in [n]$ and there exists a vector, $ x \in \mathbb{C}^n $, such that
       \begin{equation}
       			Q + x u^\dagger + u x^\dagger \geq 0.
       \end{equation}
       \item $Q \in \text{Herm}(\mathbb{C}^n)$, ${Q_{ii} = 0}$ for all $i \in [n]$ and for every vector, $ s \in \mathbb{C}^n$, such that $\langle u, s \rangle = 1$, it holds that
       \begin{equation}
				\left( \mathds{1} - u s^\dagger\right) Q \left( \mathds{1} - s u^\dagger\right) \geq 0.
       \end{equation}         
       \item $Q \in \text{Herm}(\mathbb{C}^n)$, ${Q_{ii} = 0}$ for all $i \in [n]$ and there exists a vector, $s \in \mathbb{C}^n$, such that $\langle u, s \rangle = 1$ and
       \begin{equation}
				\left( \mathds{1} - u s^\dagger\right) Q \left( \mathds{1} - s u^\dagger\right) \geq 0.
       \end{equation}   
       \item $Q \in \text{Herm}(\mathbb{C}^n)$, ${Q_{ii} = 0}$ for all $i \in [n]$ and for every vector, $ y \in \mathbb{C}^n$, such that $\langle u, y \rangle = 0$, it holds that
       \begin{equation}
       			y^\dagger Q y \geq 0.
       \end{equation}
       \end{enumerate}
       \label{lemma:Qchar}
\end{lemma}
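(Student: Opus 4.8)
The plan is to split the five statements into two clusters---$(\rom{1})$, $(\rom{2})$ on one side and $(\rom{3})$, $(\rom{4})$, $(\rom{5})$ on the other---and to bridge them through $(\rom{2})\Leftrightarrow(\rom{5})$.

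For $(\rom{1})\Leftrightarrow(\rom{2})$ I would observe that the right-hand side of Eq.~\eqref{eq:LemmaQstatement1} is nothing but a splitting of $Q$ into a Gram matrix plus a rank-two term along $u$. Setting $A_{ij}:=\langle\alpha_i,\alpha_j\rangle$ (so that $A\geq 0$), $d_i:=\Vert\alpha_i\Vert_2^2=A_{ii}$, and collecting the phases into $\phi\in\mathbb{R}^n$, a one-line computation rewrites Eq.~\eqref{eq:LemmaQstatement1} as
\[
Q = A + x u^\dagger + u x^\dagger, \qquad x_i = -\tfrac{1}{2}A_{ii} - i\phi_i .
\]
Here $Q$ is automatically Hermitian, and because $2\,\mathrm{Re}(x_i)=-A_{ii}$ one gets $Q_{ii}=A_{ii}+2\,\mathrm{Re}(x_i)=0$; so $(\rom{1})$ yields $(\rom{2})$ with witness $-x$. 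For the converse I would start from $A:=Q+xu^\dagger+ux^\dagger\geq 0$ and use the standard fact that every positive semidefinite matrix is the Gram matrix of the columns of $A^{1/2}$; taking those columns as the $\alpha_i$ and reading off $\phi_i:=\mathrm{Im}(x_i)$ recovers the form in Eq.~\eqref{eq:LemmaQstatement1}, the constraint $Q_{ii}=0$ guaranteeing $\mathrm{Re}(x_i)=\tfrac{1}{2}A_{ii}=\tfrac{1}{2}\Vert\alpha_i\Vert_2^2$ as required.

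The equivalence of $(\rom{3})$, $(\rom{4})$ and $(\rom{5})$ I would treat in one stroke by analysing the operator $N:=\mathds{1}-su^\dagger$. When $\langle u,s\rangle=1$ one checks that $u^\dagger N=0$ and $\ker N=\mathrm{span}(s)$, so $N$ has rank $n-1$ and its range is exactly the hyperplane $u^{\perp}=\{y:\langle u,y\rangle=0\}$. Since the matrix appearing in $(\rom{3})$ and $(\rom{4})$ is $N^\dagger Q N$ and the associated form is $w\mapsto (Nw)^\dagger Q(Nw)$, positivity of $N^\dagger Q N$ is equivalent to $y^\dagger Q y\geq 0$ for all $y\in u^{\perp}$---\emph{independently of the chosen $s$}. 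Hence $(\rom{5})$ holds iff $(\rom{3})$ holds iff $(\rom{4})$ holds, the implications $(\rom{3})\Rightarrow(\rom{4})\Rightarrow(\rom{5})\Rightarrow(\rom{3})$ being immediate from this common reformulation.

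It remains to connect the clusters via $(\rom{2})\Leftrightarrow(\rom{5})$. The direction $(\rom{2})\Rightarrow(\rom{5})$ is easy: evaluating $Q+xu^\dagger+ux^\dagger\geq 0$ on a vector $y$ with $\langle u,y\rangle=0$ kills the rank-two term and leaves $y^\dagger Q y\geq 0$. The direction $(\rom{5})\Rightarrow(\rom{2})$ is the \textbf{main obstacle} and is where the specific shape of the correction matters. I would decompose $\mathbb{C}^n=\mathrm{span}(u)\oplus u^{\perp}$ using the projections $P:=uu^\dagger/n$ and $P^{\perp}:=\mathds{1}-P$, and note two facts: first, since $P^{\perp}u=0$, the correction $xu^\dagger+ux^\dagger$ leaves the compression $P^{\perp}QP^{\perp}$ untouched; second, the component $P^{\perp}x$ controls the off-diagonal block $PQP^{\perp}$ while $\mathrm{Re}(u^\dagger x)$ controls the scalar $u$-block. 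Choosing $P^{\perp}x=-P^{\perp}Qu/n$ annihilates the cross term, and taking $\mathrm{Re}(u^\dagger x)$ large enough makes the $u$-block nonnegative; the remaining diagonal block $P^{\perp}QP^{\perp}$ is positive semidefinite by $(\rom{5})$. The matrix $Q+xu^\dagger+ux^\dagger$ is then block diagonal with positive semidefinite blocks, hence positive semidefinite, which is exactly $(\rom{2})$. The only genuinely non-mechanical point in the whole argument is recognising that a form positive semidefinite on the hyperplane $u^{\perp}$ can always be lifted to a globally positive semidefinite form by a correction of the restricted shape $xu^\dagger+ux^\dagger$, and checking that this correction decouples the two blocks.
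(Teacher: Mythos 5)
Your proposal is correct, and while the first equivalence matches the paper, the second half is organized along a genuinely different route. For $(\rom{1})\Leftrightarrow(\rom{2})$ you do essentially what the paper does: the paper writes $H=Q+X$, derives the consistency condition $X_{ij}=x_i+x_j^\ast$, i.e. $X=xu^\dagger+ux^\dagger$ with $x_i=X_{ii}/2+i\phi_i$, and for the converse factors $H=Q+xu^\dagger+ux^\dagger\geq 0$ as a Gram matrix and sets $\phi_i=\mathrm{Im}\lbrace x_i\rbrace$; your one-line identity $Q=A+xu^\dagger+ux^\dagger$ with $x_i=-\tfrac{1}{2}A_{ii}-i\phi_i$ is the same computation, streamlined. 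Where you depart is in the remaining implications: the paper proves $(\rom{2})\Rightarrow(\rom{3})\Rightarrow(\rom{4})\Rightarrow(\rom{2})$ and $(\rom{3})\Rightarrow(\rom{5})\Rightarrow(\rom{3})$ as separate algebraic steps, closing the loop with the explicit witness $x:=\tfrac{1}{2}(s^\dagger Qs)u-Qs$ in $(\rom{4})\Rightarrow(\rom{2})$, whereas you collapse $(\rom{3})$, $(\rom{4})$, $(\rom{5})$ into one statement by recognizing $N:=\mathds{1}-su^\dagger$ as an idempotent with kernel $\mathrm{span}(s)$ and range equal to the hyperplane $u^\perp$ (it fixes $u^\perp$ pointwise), so that $N^\dagger QN\geq 0$ is, independently of $s$, equivalent to positivity of the form on $u^\perp$; you then bridge the clusters through $(\rom{2})\Leftrightarrow(\rom{5})$, where your $(\rom{5})\Rightarrow(\rom{2})$ uses the orthogonal decomposition along $P=uu^\dagger/n$: the choice $P^\perp x=-P^\perp Qu/n$ annihilates the cross block (I checked: $PMP^\perp = u\bigl((1/n)u^\dagger QP^\perp+(P^\perp x)^\dagger\bigr)$ indeed vanishes), the free $u$-component of $x$ makes the scalar block $u^\dagger Mu=u^\dagger Qu+2n\,\mathrm{Re}(u^\dagger x)$ nonnegative, and $P^\perp QP^\perp\geq 0$ is exactly $(\rom{5})$, so $M$ is a direct sum of positive semidefinite blocks. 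Expanded out, your $x$ agrees with the paper's witness at $s=u/n$ up to a multiple of $u$, so the underlying algebra coincides; what your version buys is that the $s$-independence of $(\rom{3})$/$(\rom{4})$ becomes transparent from the projection picture, and the genuinely nontrivial step — lifting positivity on $u^\perp$ to global positivity by a rank-two border correction — is cleanly isolated, while the paper's route stays closer to Gower's Euclidean-distance-matrix computations and hands you an explicit witness for every admissible $s$ rather than only $s=u/n$.
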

\begin{proof}
		In the characterization presented here, $Q$ behaves similar to a Euclidean distance matrices \cite{Gower82, Dattorro17}. The equation of the inner products of the vectors in statement \rom{1} of the Lemma is similar to the equation for an element of a distance matrix. The proofs of these statements are almost the same as the ones given by Gower in Ref. \cite{Gower82} for the characterization of Euclidean distance matrices. It should be noted however that $Q$ here is not a distance matrix, since its entries may be complex. We will prove the statements of the theorem in the order
		\begin{align*}
			& (\rom{1}) \Rightarrow (\rom{2}) \Rightarrow (\rom{1}) \\
			& (\rom{2}) \Rightarrow (\rom{3}) \Rightarrow (\rom{4}) \Rightarrow (\rom{2}) \\
			& (\rom{3}) \Rightarrow (\rom{5}) \Rightarrow (\rom{3}).
		\end{align*}
		 We will prove that statement $(\rom{1}) \Rightarrow $ statement $(\rom{2})$. It is clear from Eq. \ref{eq:LemmaQstatement1} that $Q$ is Hermitian and that for every $i \in [n]$, $Q_{ii}=0$. Let $H$ be the Gram matrix of the vectors, $\lbrace \alpha_i \rbrace_{i=1}^n$,  that is, ${H_{ij} := \langle \alpha_i, \alpha_j \rangle}$. Then, we have that $H$ is positive semidefinite, since the set of Gram matrices and positive semidefinite matrices is equivalent. We can write Eq. \ref{eq:LemmaQstatement1} as
		\begin{align}
				Q_{ij} & = i \left( \phi_j -\phi_i \right) -\frac{1}{2} \left( H_{ii} + H_{jj} -2H_{ij} \right).
				\label{eq:MasterEqnInG}
		\end{align}		 
		Without loss of generality let, $H = Q +X$ for some $X$. Then, $H_{ii}= Q_{ii} +X_{ii}= 0+X_{ii} = X_{ii}$. We substitute this into Eq. \ref{eq:MasterEqnInG} to derive a consistency equation for $X$.
		\begin{align*}
				Q_{ij} & = i \left( \phi_j -\phi_i \right) - \frac{1}{2} \left( X_{ii} + X_{jj} -2Q_{ij} - 2X_{ij} \right) \\
				\Rightarrow X_{ij} &= \left( \frac{1}{2}  X_{ii}+i \phi_i\right) + \left(\frac{1}{2}  X_{jj} - i \phi_j\right)
		\end{align*}	
		Define $x \in \mathbb{C}^n$, as $\ x_i  := {X_{ii}}/{2} + i \phi_i \ (X_{ii} = \Vert \alpha_i \Vert_2^2 \in \mathbb{R}$, and $\phi_i \in \mathbb{R})$. Then, for $i,j \in [n]$ we may write
		\begin{align*}
				X_{ij} =  x_i +{x}^\ast_j ,
		\end{align*}
		which is equivalent to 
		\begin{align}
			X =  x u^\dagger +u x^\dagger.
		\end{align}
		Therefore, if statement $(\rom{1})$ is true, then ${H = Q + x u^\dagger +u x^\dagger \geq 0}$. Thus statement $\rom{2}$ is true. \\
		
		For the converse, statement $(\rom{2}) \Rightarrow$ statement $ (\rom{1})$, assume that $Q \in \text{Herm}(\mathbb{C}^n)$, for every $ i \in [n]$ $Q_{ii}=0$ and that there is a vector, ${ x \in \mathbb{C}^n}$, such that ${Q + x u^\dagger +u x^\dagger \geq 0}$. Let $H:= Q + x u^\dagger +u x^\dagger$. Then, there exists a set of vectors, ${\lbrace \alpha_i \rbrace_{i=1}^n}$, such that ${H_{ij}=\langle \alpha_i, \alpha_j \rangle}$, since $H$ is positive semi-definite. We can now show that these amplitude vectors satisfy Eq. \ref{eq:LemmaQstatement1} for an appropriate definition of $\phi$. To see this, observe that
		\begin{align*}
				H_{ij} &= Q_{ij} + x_i +{x}^\ast_j \\
				H_{ii} &=  x_i +{x}^\ast_i.
		\end{align*}
		Now let's evaluate the following expression. 
		\begin{align*}
				\Vert \alpha_i \Vert_2^2+ & \Vert \alpha_j \Vert_2^2 - 2\langle \alpha_i, \alpha_j \rangle \\
				&= H_{ii} + H_{jj} -2H_{ij}\\
				&= x_i + x^\ast_i + x_j + x^\ast_j - 2 Q_{ij} - 2 x_i - 2 x^\ast_j \\
				&= - 2 Q_{ij} - \left(  x_i - x^\ast_i \right) + \left(  x_j - x^\ast_j \right)\\
				&= - 2 Q_{ij} - 2i \text{Im} \lbrace x_i \rbrace +2i \text{Im}\lbrace x_j \rbrace
		\end{align*}
		Define, $\phi \in \mathbb{R}^n$, such that $\phi_i:= \text{Im} \lbrace x_i \rbrace \in \mathbb{R}$. Then, the right hand side of Eq. \ref{eq:LemmaQstatement1} is 
		\begin{align*}
				& -\frac{1}{2} \left( \Vert \alpha_i \Vert_2^2+\Vert \alpha_j \Vert_2^2 - 2\langle \alpha_i, \alpha_j \rangle \right) + i \left( \phi_j - \phi_i \right) \\
				&= \left( Q_{ij} + i\text{Im}\lbrace x_i \rbrace - i \text{Im}\lbrace x_j \rbrace \right) + i \left( \text{Im} \lbrace x_j\rbrace - \text{Im} \lbrace x_i \rbrace \right) \\
				&= Q_{ij}.
		\end{align*}
		Hence, given statement $(\rom{2})$ one can define a set of amplitude vectors, $\lbrace \alpha_i \rbrace_{i=1}^n \subseteq \mathbb{C}^n$, and a vector of phases, $ \phi \in \mathbb{R}^n$ such that Eq. \ref{eq:LemmaQstatement1} is satisfied for all $i, j \in [n]$. Therefore, $(\rom{1}) \Leftrightarrow (\rom{2})$.\\
		
		For all the rest of the statements the facts ${Q \in \text{Herm}(\mathbb{C}^n)}$ and $Q_{ii}=0$ for every $i \in [n]$ are common. So, we don't need to prove them for each statement. We will assume these and prove the rest of the statements. To see that statement $(\rom{2}) \Rightarrow $ statement $(\rom{3})$, choose a vector, $s \in \mathbb{C}^n$, such that $\langle u, s\rangle =1$. We note that for any such choice
		\begin{align*}
				\left( \mathds{1} - u s^\dagger\right) & x u^\dagger \left( \mathds{1} - s u^\dagger\right) \\
				&= \left( \mathds{1} - u s^\dagger\right) \left( x u^\dagger - x (u^\dagger s) u^\dagger\right) \\
				&= 0. \numberthis \label{eq:Proof3}
		\end{align*}
		Using the fact that the expression in Eq. \ref{eq:Proof3} and its conjugate are zero, we can now show, starting with statement $(\rom{2})$, that
		\begin{align*}
				& H := Q + x u^\dagger +u x^\dagger  \geq 0 \\
				& \Rightarrow \left( \mathds{1} - u s^\dagger\right) H \left( \mathds{1} - s u^\dagger\right)  \geq 0 \\
				& \Rightarrow \left( \mathds{1} - u s^\dagger\right) Q \left( \mathds{1} - s u^\dagger\right)  \geq 0.
		\end{align*}
		This proves that statement $(\rom{2}) \Rightarrow $ statement $(\rom{3})$. Statement $(\rom{3}) \Rightarrow $ statement $ (\rom{4})$ trivially. \\ \\
		For statement $(\rom{4}) \Rightarrow $ statement $(\rom{2})$, assume that the vector, $s \in \mathbb{C}^n$ is such that $\langle u, s \rangle=1$, and
		\begin{align*}
				 & \left( \mathds{1} - u s^\dagger\right) Q \left( \mathds{1} - s u^\dagger\right)  \geq 0.
		\end{align*}
		We can write the above as 
		\begin{align*}
				Q + x u^\dagger +u x^\dagger  \geq 0,
		\end{align*}
		for the choice, ${x := \frac{1}{2}\left( s^\dagger Q s\right) u - Q s}$. Hence, we have shown the equivalence of the first four statements. \\ \\
		For statement $(\rom{3})\Rightarrow$ statement $(\rom{5})$, we have that  ${\left( \mathds{1} - u s^\dagger\right) Q \left( \mathds{1} - s u^\dagger\right) \geq 0}$ for every $s \in \mathbb{C}^n$ such that $\langle u, s \rangle =1$ using statement $(\rom{3})$. Then, for every vector, ${y \in \mathbb{C}^n}$, such that, ${\langle u, y\rangle = 0 }$, 
		\begin{align*}
			& y^\dagger \left( \mathds{1} - u s^\dagger\right) Q \left( \mathds{1} - s u^\dagger\right)   y \geq 0 \\
			& \Rightarrow y^\dagger Q y \geq 0.
		\end{align*}
		For statement $(\rom{5}) \Rightarrow $ statement $(\rom{3})$, we assume that for every vector, ${y \in \mathbb{C}^n}$, such that, ${\langle u, y\rangle = 0}$,
		\begin{align*}
			y^\dagger Q y & \geq 0.
		\end{align*}
		If we choose any vector $s\in \mathbb{C}^n$ such that $\langle u, s\rangle=1$, and any vector $v \in \mathbb{C}^n$, then we can construct a vector $y$ with $\langle u, y \rangle=0$ by setting $y= (\mathds{1}- s u^\dagger) v$. Using statement $(\rom{5})$ with this choice of vector y then gives
		\begin{align*}
			& v^\dagger \left( \mathds{1} - u s^\dagger\right) Q  \left( \mathds{1} - s u^\dagger\right) v \geq 0 \\
			& \Rightarrow \left( \mathds{1} - u s^\dagger\right) Q \left( \mathds{1} - s u^\dagger\right) \geq 0,
		\end{align*}
		which proves statement $(\rom{5}) \Rightarrow $ statement $(\rom{3})$.
		
\end{proof}

\noindent We can use this Lemma to check if a matrix is a Gram matrix of multi-mode coherent states. First, we define the Hadamard logarithm, as 
\begin{align*}
	(\log\odot (P))_{ij} := \log P_{ij}.
\end{align*}
Using Eq. \ref{eq:logMaster1} and Lemma \ref{lemma:Qchar}, we have that a matrix $P$ such that $P_{ii}=1$ for every $i \in [n]$ is Gram matrix of coherent states if and only if there exists a matrix of integers $N \in \mathbb{Z}^{n \times n}$ such that ${(\log \odot(P)-2\pi iN) \in \text{Herm}(\mathbb{C}^n)}$, ${2 \pi i N_{ii} = \log 1}$ for every $i \in [n]$ and  
\begin{align*}
	\left( \mathds{1} - \frac{u u^\dagger}{n}\right) (\log \odot(P)-2\pi iN) \left( \mathds{1} - \frac{u u^\dagger}{n}\right) \geq 0.
\end{align*}
However, in this form this condition cannot be checked since there are countably infinite choices for the matrix, $N$. Now, we will show that we only need to check this condition for a finite number of matrices, $N$. First, we show that if a matrix can be written as a Gram matrix of multi-mode coherent states, then one of the coherent states can be chosen as $\mket{0}$.
\begin{lemma}
	If ${P=G(e^{i \phi_1} \mket{\alpha_1}, e^{i \phi_2} \mket{\alpha_2}, \ \cdots \ ,e^{i \phi_n} \mket{\alpha_n} )}$, then ${P=G(e^{i \phi'_1} \mket{0}, e^{i \phi'_2} \mket{\alpha'_2}, \ \cdots \ ,e^{i \phi'_n} \mket{\alpha'_n} )}$, where ${\alpha'_i = \alpha_i-\alpha_1}$, and ${\phi'_i = \phi_i - \text{Im}\lbrace \langle \alpha_i ,\alpha_1 \rangle\rbrace}$ for every ${i \in [n]}$.
	\label{lemma:GAlphaZero}
\end{lemma}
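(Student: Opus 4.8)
The plan is to verify the claimed identity entrywise using the closed form for $P_{ij}$ already established in Eq.~\ref{eq:ExpMaster2}. Since a Gram matrix is completely determined by its entries, it suffices to substitute the primed amplitudes $\alpha'_i = \alpha_i - \alpha_1$ and the primed phases $\phi'_i = \phi_i - \text{Im}\lbrace \langle \alpha_i, \alpha_1 \rangle \rbrace$ into the right-hand side of Eq.~\ref{eq:ExpMaster2} and check that the result equals $P_{ij}$ for every $i,j \in [n]$. I would first observe the structural point that $\alpha'_1 = \alpha_1 - \alpha_1 = 0$, so the first state of the new decomposition is indeed $e^{i\phi'_1}\mket{0}$, as asserted.

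I would then split the verification into the real and imaginary parts of the exponent in Eq.~\ref{eq:ExpMaster2}. The real part is governed by the Euclidean distance, and since $\alpha'_i - \alpha'_j = \alpha_i - \alpha_j$, we have $\Vert \alpha'_i - \alpha'_j \Vert_2^2 = \Vert \alpha_i - \alpha_j \Vert_2^2$. This translation invariance shows immediately that the modulus $\vert P_{ij} \vert$ is unchanged, so the entire content of the lemma is concentrated in the argument of $P_{ij}$.

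For the imaginary part, I would expand the shifted inner product by sesquilinearity (conjugate-linear in the first slot),
\begin{align*}
\langle \alpha'_i, \alpha'_j \rangle = \langle \alpha_i, \alpha_j \rangle - \langle \alpha_i, \alpha_1 \rangle - \langle \alpha_1, \alpha_j \rangle + \Vert \alpha_1 \Vert_2^2,
\end{align*}
and take imaginary parts, using that $\Vert \alpha_1 \Vert_2^2$ is real and $\text{Im}\lbrace \langle \alpha_1, \alpha_j \rangle \rbrace = -\text{Im}\lbrace \langle \alpha_j, \alpha_1 \rangle \rbrace$. This leaves $\text{Im}\lbrace \langle \alpha'_i, \alpha'_j \rangle \rbrace = \text{Im}\lbrace \langle \alpha_i, \alpha_j \rangle \rbrace - \text{Im}\lbrace \langle \alpha_i, \alpha_1 \rangle \rbrace + \text{Im}\lbrace \langle \alpha_j, \alpha_1 \rangle \rbrace$. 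Separately, the primed phase difference is $\phi'_j - \phi'_i = (\phi_j - \phi_i) + \text{Im}\lbrace \langle \alpha_i, \alpha_1 \rangle \rbrace - \text{Im}\lbrace \langle \alpha_j, \alpha_1 \rangle \rbrace$. Adding $\text{Im}\lbrace \langle \alpha'_i, \alpha'_j \rangle \rbrace + (\phi'_j - \phi'_i)$, the two pairs of $\alpha_1$-dependent terms cancel, leaving exactly $\text{Im}\lbrace \langle \alpha_i, \alpha_j \rangle \rbrace + (\phi_j - \phi_i)$, which is the imaginary part in the original Eq.~\ref{eq:ExpMaster2}. Hence both the real and imaginary parts of the exponent agree, giving $P_{ij}$ exactly.

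I do not expect a genuine obstacle here, since the argument is a direct computation; the result is essentially that the Gram matrix of multi-mode coherent states is invariant under a global translation of the amplitude vectors, once phases are compensated. The one point that requires care is the inner-product convention fixed by the factor $2\alpha_{ik}^\ast \alpha_{jk}$ preceding Eq.~\ref{eq:ExpMaster1} (conjugate-linear in the first argument), as this is precisely what forces the correct phase correction $\phi'_i = \phi_i - \text{Im}\lbrace \langle \alpha_i, \alpha_1 \rangle \rbrace$ rather than its conjugate; pinning this down is what makes the cross terms cancel cleanly.
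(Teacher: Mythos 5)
Your proposal is correct and is exactly the paper's approach: the paper's proof consists of the single remark that the lemma follows by plugging the primed amplitudes and phases into Eq.~\ref{eq:ExpMaster2}, and your calculation (translation invariance of $\Vert \alpha_i - \alpha_j \Vert_2$ plus the sesquilinear expansion of $\text{Im}\lbrace \langle \alpha'_i, \alpha'_j \rangle \rbrace$ cancelling against the shifted phase differences) is precisely that substitution carried out in detail. Your closing remark about the conjugate-linear-in-the-first-slot convention is also the right point of care, matching the convention fixed before Eq.~\ref{eq:ExpMaster1}.
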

\begin{proof}
	This can be proven by just plugging in the values given above in Eq. \ref{eq:ExpMaster2}. 
\end{proof}
Our goal is to restrict the values the matrix element $N_{ij}$ can take. We accomplish this by using a complex $\log$ function mapping to the branch, $[\beta, \beta+2\pi)$ in the following equation. 
\begin{align}
			\nonumber \log (P_{ij}) & = -\frac{1}{2} \Vert \alpha_i - \alpha_j \Vert_2^2 \\
			& + i\left( \text{Im}\lbrace \langle \alpha_i, \alpha_j \rangle\rbrace + \left( \phi_j -\phi_i \right) + 2\pi N_{ij} \right)
			\label{eq:logMaster2Right}
		\end{align}	
The imaginary part on the left hand side of Eq. \ref{eq:logMaster2Right} is restricted to the interval $[\beta, \beta+2\pi)$. If we can bound the terms in the imaginary part on the right hand side, then we would be able to bound the terms $N_{ij}$ as well. For this purpose, we define the parameter, 
\begin{align}
	\delta := \min_{i,j} |P_{ij}|.
	\label{eq:deltaDefn}
\end{align}
If $\delta=0$, then we know that ${P \notin G (\mathscr{C}_m^n)}$, since the inner product between two coherent states cannot be zero (see Eq. \ref{eq:ExpMaster1}). So, we consider $\delta \neq 0$. If ${P \in G (\mathscr{C}_m^n)}$, then ${P=G(e^{i \phi_1} \mket{\alpha_1}, e^{i \phi_2} \mket{\alpha_2}, \ \cdots \ ,e^{i \phi_n} \mket{\alpha_n} )}$ for some amplitude vectors, $\lbrace \alpha_i \rbrace_{i=1}^n$ with $\alpha_1=0$ and real phases, $\lbrace \phi_i \rbrace_{i=1}^n$ (using Lemma \ref{lemma:GAlphaZero}). Moreover, we can assume that for every $i \in [n]$, ${\phi_i \in [0, 2\pi)}$. Then, for every $i,j \in [n]$ we have that
\begin{align}
	-2\pi \leq \phi_j - \phi_i \leq 2\pi.
	\label{eq:AngleBound}
\end{align}
For a matrix ${P \in G (\mathscr{C}_m^n)}$, we have
\begin{align*}
	\delta &= \min_{i,j} \lbrace \exp \left( -\frac{1}{2} \Vert \alpha_i -\alpha_j \Vert_2^2 \right) \rbrace \\
	& =  \exp \left( -\frac{1}{2} \left( \max_{i,j} \lbrace \Vert \alpha_i -\alpha_j \Vert_2 \rbrace \right)^2 \right).
\end{align*}
This implies that 
\begin{align*}
	\max_{i,j} \lbrace \Vert \alpha_i -\alpha_j \Vert_2 \rbrace  = (-2 \log (\delta))^{1/2}.
\end{align*}
For every $i \in [n]$, and amplitude vector, $\alpha_i$ in this representation of $P$, we have
\begin{align*}
	\Vert \alpha_i \Vert_2 & = \Vert \alpha_i -0 \Vert_2 \\
	 & = \Vert \alpha_i -\alpha_1 \Vert_2 \\
	 & \leq \max_{i,j} \lbrace \Vert \alpha_i -\alpha_j \Vert_2 \rbrace= (-2 \log (\delta))^{1/2}.
\end{align*}
Further, we have that for every $i,j \in [n]$
\begin{align*}
	\vert \langle \alpha_i , \alpha_j \rangle \vert  & \leq \Vert \alpha_i \Vert_2 \Vert \alpha_j \Vert_2 \\
	&\leq -2 \log (\delta) = \vert 2 \log (\delta) \vert. 
\end{align*}
We will use this to bound the Im$\lbrace \alpha_i ,\alpha_j\rbrace$, using 
\begin{align}
	\vert \text{Im} \lbrace \langle \alpha_i ,\alpha_j \rangle \rbrace \vert \leq \vert \langle \alpha_i ,\alpha_j\rangle \vert \leq \vert 2 \log (\delta) \vert
	\label{eq:IPBound}
\end{align}
Now, observe that in Eq. \ref{eq:logMaster2Right} if we consider the $\log$ function with $[\beta, \beta+2\pi)$ branch, for every $i,j \in [n]$ we have 
\begin{align*}
	& \beta \leq 2\pi N_{ij} + \text{Im}\lbrace \langle \alpha_i , \alpha_j \rangle \rbrace + (\phi_j-\phi_i) \\
	\Rightarrow\ & \beta - \text{Im}\lbrace \langle \alpha_i , \alpha_j \rangle \rbrace - (\phi_j-\phi_i) \leq 2\pi N_{ij} \\
	\Rightarrow\ & \beta - \vert 2 \log (\delta) \vert - 2\pi \leq 2\pi N_{ij},
\end{align*}
and, 
\begin{align*}
	& 2\pi N_{ij} + \text{Im}\lbrace \langle \alpha_i , \alpha_j \rangle \rbrace + (\phi_j-\phi_i) < \beta +2\pi \\
	\Rightarrow\ & 2\pi N_{ij} < \beta +2\pi - \text{Im}\lbrace \langle \alpha_i , \alpha_j \rangle \rbrace - (\phi_j-\phi_i) \\
	\Rightarrow\ & 2\pi N_{ij} < \beta +4\pi + \vert 2 \log (\delta) \vert,
\end{align*}
for which we have used Eqs. \ref{eq:AngleBound} and \ref{eq:IPBound}. Thus for every $i,j \in [n]$ we have the following bound: 
\begin{align}
	N_{ij} \in \mathbb{Z} \cap \left[ \frac{\beta}{2\pi} - \frac{1}{\pi}\vert \log (\delta) \vert - 1, \frac{\beta}{2\pi} + \frac{1}{\pi}\vert \log (\delta) \vert +2 \right).
	\label{eq:NBounds}
\end{align}
To summarize, we have proven that if ${P \in G (\mathscr{C}_m^n)}$, then there exists an integer matrix $N \in \mathbb{Z}^{n \times n}$ with elements in the range given by Eq. \ref{eq:NBounds} and a set of vectors, $\lbrace \alpha_i \rbrace_i$ (where $\alpha_1 =0$ specifically) and phases, $\lbrace \phi_i \rbrace_i$, such that for every $i,j$
\begin{align*}
\log (P_{ij}) - i2\pi N_{ij} = - \frac{1}{2} & \left( \Vert \alpha_i \Vert_2^2 +  \Vert \alpha_j \Vert_2^2 - 2 \langle \alpha_i, \alpha_j \rangle\right) \\
			 & \qquad +  i \left( \phi_j -\phi_i \right).
\end{align*}
Using the characterization of matrix equations of this form given in Lemma \ref{lemma:Qchar}, we have that this is equivalent to ${(\log\odot(P)-2\pi iN)}$ being a Hermitian matrix, ${2\pi iN_{ii}=\log(1)}$ for every $i \in [n]$ and 
\begin{align*}
	\left( \mathds{1} - \frac{u u^\dagger}{n}\right) (\log\odot(P)-2\pi iN) \left( \mathds{1} - \frac{u u^\dagger}{n}\right) \geq 0.
\end{align*}
Thus, we have proven the following Theorem, which characterizes the Gram matrices of multi-mode coherent states. Here we consider the $\log$ function with the branch $[\beta, \beta+2\pi)$.
\begin{thm}
For a matrix, $P \in \text{Herm}(\mathbb{C}^n)$ such that ${P_{ii} = 1}$ for all $i \in [n]$, $P \in G(\mathscr{C}_m^n)$ for some $m \in \mathbb{N}$ if and only if there exists an integer matrix $N \in\mathbb{Z}^{n \times n}$, such that ${(\log \odot (P) - 2 \pi i N) \in  \text{Herm}(\mathbb{C}^n)}$, $2 \pi i N_{ii} = \log(1)  $ for every $i \in [n]$, and 
	\begin{align}
		\left( \mathds{1} - \frac{u u^\dagger}{n}\right) (\log\odot(P)-2\pi iN) \left( \mathds{1} - \frac{u u^\dagger}{n}\right) \geq 0.
		\label{eq:GcharNew2}
	\end{align}
	Further, we can restrict the range of the elements of $N$ to the following:
		\begin{align}
		N_{ij} \in \mathbb{Z} \cap \left[ \frac{\beta}{2\pi} - \frac{1}{\pi}\vert \log (\delta) \vert - 1, \frac{\beta}{2\pi} + \frac{1}{\pi}\vert \log (\delta) \vert +2 \right)
		\label{eq:GcharNew1}
	\end{align}
	for every $i,j \in [n]$, where $\delta := \min_{i,j} |P_{ij}| >0$.
	\label{thm:GcharNew}
\end{thm}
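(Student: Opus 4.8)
The strategy is to read the theorem as a repackaging of three results already in hand: the exponential-to-logarithm reduction of Eqs.~\ref{eq:ExpMaster1}--\ref{eq:logMaster2}, the matrix characterization in Lemma~\ref{lemma:Qchar}, and the amplitude-shift freedom in Lemma~\ref{lemma:GAlphaZero}. The central device is to write $Q := \log\odot(P) - 2\pi i N$ for an integer matrix $N$; because $N_{ij} \in \mathbb{Z}$, one has $\exp(Q_{ij}) = P_{ij}$ for all $i,j$, so the integer shift is invisible after exponentiation, and the theorem's hypotheses ``$(\log\odot(P) - 2\pi i N) \in \text{Herm}(\mathbb{C}^n)$ and $2\pi i N_{ii} = \log 1$'' are exactly the standing conditions ``$Q$ Hermitian, $Q_{ii}=0$'' appearing throughout Lemma~\ref{lemma:Qchar}.

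With this identification both implications follow from Lemma~\ref{lemma:Qchar}. For sufficiency, suppose an $N$ obeying the three conditions exists; then the displayed inequality (Eq.~\ref{eq:GcharNew2}) is precisely statement~(IV) of the lemma for the admissible vector $s = u/n$, since $\langle u, u/n\rangle = 1$ and $u s^\dagger = u u^\dagger / n$. The implication (IV)$\,\Rightarrow\,$(I) then produces vectors $\{\alpha_i\}$ and phases $\{\phi_i\}$ realizing $Q_{ij}$ in the form of Eq.~\ref{eq:LemmaQstatement1}, and exponentiating and comparing with Eq.~\ref{eq:ExpMaster1} gives $P_{ij} = \langle e^{i\phi_i}\mket{\alpha_i}, e^{i\phi_j}\mket{\alpha_j}\rangle$, i.e.\ $P \in G(\mathscr{C}_m^n)$. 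For necessity I would run the reduction in reverse: if $P \in G(\mathscr{C}_m^n)$, Eqs.~\ref{eq:ExpMaster1}--\ref{eq:logMaster2} furnish amplitudes, phases, and an integer matrix making $Q$ of the form in statement~(I), and the implication (I)$\,\Rightarrow\,$(III) hands back the Hermiticity, the zero diagonal, and the inequality at $s = u/n$; the relation $2\pi i N_{ii} = \log P_{ii} = \log 1$ is immediate.

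The only substantive work left is the explicit interval (Eq.~\ref{eq:GcharNew1}), and this is where I expect the real difficulty, since a priori $N$ ranges over a countably infinite set. The idea is to pin $\text{Im}\log P_{ij}$ to the width-$2\pi$ window $[\beta,\beta+2\pi)$ via Eq.~\ref{eq:logMaster2Right} and then bound the remaining terms of the imaginary part. To do so I would first invoke Lemma~\ref{lemma:GAlphaZero} to pass to a representation with $\alpha_1 = 0$ and take all phases in $[0,2\pi)$, forcing $|\phi_j - \phi_i| \le 2\pi$. The choice of origin is essential: with $\alpha_1 = 0$ the identity $\delta = \exp(-\tfrac12 \max_{i,j}\Vert \alpha_i - \alpha_j\Vert_2^2)$ yields $\Vert \alpha_i \Vert_2 = \Vert \alpha_i - \alpha_1 \Vert_2 \le (-2\log\delta)^{1/2}$, whence Cauchy--Schwarz gives $|\text{Im}\langle \alpha_i,\alpha_j\rangle| \le |2\log\delta|$; without a fixed origin these overlaps are unbounded even at fixed $\delta$, and no finite range would exist. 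Substituting both bounds into the branch constraint and solving the two resulting inequalities for $N_{ij}$ delivers the stated interval, with the rest being bookkeeping — matching real and imaginary parts and confirming that exponentiation discards the integer shift $N$.
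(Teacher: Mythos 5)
Your proposal reconstructs the paper's own proof essentially step for step: the substitution $Q := \log\odot(P) - 2\pi i N$ fed into Lemma~\ref{lemma:Qchar} (statements (\rom{3})/(\rom{4}) with $s = u/n$) for both directions of the equivalence, and the finite range for $N$ obtained via Lemma~\ref{lemma:GAlphaZero} by normalizing $\alpha_1 = 0$, taking $\phi_i \in [0,2\pi)$ so that $|\phi_j - \phi_i| \le 2\pi$, and using Cauchy--Schwarz to get $|\mathrm{Im}\langle \alpha_i, \alpha_j\rangle| \le |2\log\delta|$ before solving the branch constraint of Eq.~\ref{eq:logMaster2Right} for $N_{ij}$. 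This is correct and takes the same approach as the paper, including the key observation that fixing the origin at $\alpha_1$ is what makes the bound on the overlaps (and hence the finiteness of the search over $N$) possible.
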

Note that Eq. \ref{eq:GcharNew2} can be replaced with any of the statements from Lemma \ref{lemma:Qchar}. We have used $s= u/n$ in the statements \rom{3} and \rom{4} of Lemma \ref{lemma:Qchar} for simplicity. Moreover, for every matrix $P \in \text{Herm}(\mathbb{C}^n)$ we only need to check Eq. \ref{eq:GcharNew2} for finitely many matrices, $N$. However, we need to check the conditions for $\exp(O(n^2))$ number of matrices $N$, where $n$ is the size of the matrix $P$. It might be possible to use semidefinite programming and rounding methods to create a more efficient algorithm for this task but we do not pursue this lead here.\\

In the following corollary, we show that we can restrict the number of modes of the coherent states forming a Gram matrix to the size of the matrix.
\begin{corollary}
	For every $ m \in \mathbb{N}$, $G\left( \mathscr{C}_{n+m}^{n}\right) = G\left( \mathscr{C}_{n}^{n}\right)$. \\
	That is, no more than $n$-modes are required to represent a Gram matrix of $n$-vectors.
	\label{corr:GOnlyN}
\end{corollary}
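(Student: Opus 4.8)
The plan is to prove the equality by establishing the two inclusions $G(\mathscr{C}_n^n) \subseteq G(\mathscr{C}_{n+m}^n)$ and $G(\mathscr{C}_{n+m}^n) \subseteq G(\mathscr{C}_n^n)$ separately. The first inclusion is immediate: given amplitude vectors $\alpha_1, \ldots, \alpha_n \in \mathbb{C}^n$ and phases $\phi_1, \ldots, \phi_n$ realizing some $P \in G(\mathscr{C}_n^n)$, I would append $m$ zero components to each $\alpha_i$, so that each $\mket{\alpha_i}$ acquires $m$ extra vacuum factors $\ket{0}$. Since $\langle 0 \vert 0 \rangle = 1$, none of the pairwise inner products change, and the same $P$ is realized in $\mathscr{C}_{n+m}$. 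Hence $G(\mathscr{C}_n^n) \subseteq G(\mathscr{C}_{n+m}^n)$.

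The key observation for the reverse inclusion is that, by Eq.\ \ref{eq:ExpMaster1}, the entry $P_{ij}$ depends on the amplitude vectors $\alpha_i \in \mathbb{C}^{n+m}$ only through the Hermitian inner products $\langle \alpha_i, \alpha_j \rangle$, that is, only through the $n \times n$ Gram matrix $H$ with $H_{ij} := \langle \alpha_i, \alpha_j \rangle$, together with the phase vector $\phi$. So if I can replace the amplitude vectors by new vectors living in $\mathbb{C}^n$ that share the same Gram matrix $H$, while keeping the phases fixed, the resulting coherent states will reproduce $P$ exactly.

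This replacement is possible by a rank argument. Since $H \in \text{Pos}(\mathbb{C}^n)$, its principal square root $H^{1/2} \in \text{Pos}(\mathbb{C}^n)$ satisfies $H = (H^{1/2})^\dagger H^{1/2}$. Taking $\beta_i$ to be the $i$-th column of $H^{1/2}$, which is a vector in $\mathbb{C}^n$, gives $\langle \beta_i, \beta_j \rangle = H_{ij}$ for all $i,j \in [n]$. Consequently $P = G(e^{i\phi_1}\mket{\beta_1}, \ldots, e^{i\phi_n}\mket{\beta_n})$ with every $\beta_i \in \mathbb{C}^n$, so $P \in G(\mathscr{C}_n^n)$. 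Combining this with the first inclusion yields the claimed equality $G(\mathscr{C}_{n+m}^n) = G(\mathscr{C}_n^n)$ for every $m \in \mathbb{N}$.

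Because both inclusions reduce to elementary facts, I do not anticipate a genuine obstacle here; the only point requiring care is the recognition that $P$ factors through the amplitude Gram matrix $H$ rather than through the individual vectors $\alpha_i$. This is precisely what allows the number of modes to be reduced to the number of vectors $n$, and it reflects the familiar fact that any $n$ vectors in an arbitrary Hilbert space can be represented inside an $n$-dimensional subspace.
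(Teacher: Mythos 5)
Your proposal is correct and follows essentially the same route as the paper: both rest on the observation that, by Eq.~\ref{eq:ExpMaster1}, $P$ depends on the amplitude vectors only through their Gram matrix $H$, so the $n$ vectors may be replaced by any realization of $H$ in $\mathbb{C}^n$. The paper phrases this as isometric invariance of the defining equation (allowing the vectors to be constrained to an $n$-dimensional space), while you instantiate the isometry concretely via the columns of $H^{1/2}$ and spell out the easy inclusion by vacuum padding --- differences of presentation, not of substance.
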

\begin{proof}
	A matrix $P \in \text{Herm}(\mathbb{C}^n)$ with $P_{ii}=1$ for every $i \in [n]$ can be written as a Gram matrix of coherent states if and only if there exist complex amplitude vectors, $\lbrace \alpha_i \rbrace_{i=1}^n \subset \mathbb{C}^m$ (where $m$ is the number of modes) and a set of real phases $\lbrace \phi_i \rbrace_{i=1}^n$ such that  
	\begin{align*}
			P_{ij} & = \exp\left(- \frac{1}{2} \left( \Vert \alpha_i \Vert_2^2 + \Vert \alpha_j \Vert_2^2 - 2 \langle \alpha_i, \alpha_j \rangle\right)+  i \left( \phi_j -\phi_i \right) \right).
	\end{align*}
	It is clear from the above equation that the collection of vectors, ${\lbrace \alpha_i \rbrace_{i=1}^n}$, is isometrically invariant, i.e.,  if $\lbrace \alpha_i \rbrace_{i=1}^n$ satisfy this equation, then for a ${ U \in U(\mathbb{C}^m, \mathbb{C}^p)}$, the vectors ${\lbrace U \alpha_i \rbrace_{i=1}^n}$ also satisfy this equation. Therefore, one can simply constrain the vectors, $\lbrace \alpha_i \rbrace_{i=1}^n$, to be in an $n$-dimensional space (since there are only $n$ vectors), i.e., we can choose the number of modes $m=n$. \\
\end{proof}

Corollary \ref{corr:GOnlyN} tells us that one cannot do better by simply increasing the number of modes. With this result in hand, we can see that the set of Gram matrices, which can be constructed using coherent states, is just $G(\mathscr{C}_n^n)$. Therefore, in Section \ref{sec:Gclosure}, where we study the closure of the set of Gram matrices of multi-mode coherent states, we will just consider the set $G(\mathscr{C}_n^n)$.\\

We will demonstrate our characterization result by using it to show that $n$ multi-mode coherent states, $\lbrace e^{i \phi_i}\mket{\alpha_i} \rbrace_{i=1}^n$ can be chosen such that for $ i \neq j$ the inner product $\langle e^{i \phi_i} \mket{\alpha_i}, e^{i \phi_j}\mket{\alpha_j} \rangle = r$ for some $r \in (0,1]$. That is, we can choose acute equiangular multi-mode coherent states.

\begin{example}
	\emph{Acute equiangular coherent states} \\
	To show that coherent states with the aforementioned property exist, we show that the Gram matrix, ${P \in \text{Herm}(\mathbb{C}^n)}$ such that $P_{ii} =1$ for every $i \in [n]$, and $P_{ij} = r$ (where $r \in (0,1]$) for every $i \neq j \in [n]$, can be constructed using coherent states. For this we choose the $\log$ function with $[-\pi, \pi)$ branch. Then, $\log\odot P$ is Hermitian. Moreover, $\log P_{ii} =0$ for every $ i \in [n]$, and $\log P_{ij} = \log (r)$ for every $i \neq j \in [n]$. That is, 
	\begin{align*}
		\log \odot P = \log (r) u u^\dagger - \log (r) \mathds{1}.
	\end{align*}
	Then, we have that 
	\begin{align*}
		& \left( \mathds{1} - \frac{u u^\dagger}{n}\right)(\log \odot P ) \left( \mathds{1} - \frac{u u^\dagger}{n}\right) \\
		& \quad  = \log \left(\frac{1}{r}\right)\left( \mathds{1} - \frac{u u^\dagger}{n}\right) \geq 0.
	\end{align*}
	Thus, we can choose $n$ coherent states, such that the inner products between any two of these states is equal to some $r \in (0,1]$. 
	\label{ex:Ex1}
\end{example}

We can also use our characterization to prove that the Hadamard exponential of a Euclidean distance matrices is positive semidefinite. We will formulate this result as Corollary \ref{corr:EDM}. A matrix, ${D \in \text{L}(\mathbb{R}^n)}$ is a Euclidean distance matrix if there exist vectors, ${\lbrace \alpha_i \rbrace_{i=1}^n \subset \mathbb{R}^n}$, such that for ${i,j \in [n]}$
\begin{align*}
	D_{ij} = -\frac{1}{2} \Vert \alpha_i - \alpha_j \Vert_2^2.
\end{align*}
In Ref. \cite{Gower82} it has been proven that a matrix is a Euclidean distance matrix if and only if  
\begin{align}
	\left( \mathds{1} - \frac{uu^\dagger}{n}\right) D \left( \mathds{1} - \frac{uu^\dagger}{n}\right) \geq 0  .
	\label{eq:DistMatrixTest}   
\end{align}
Lastly, we define the Hadamard exponential of a matrix, $X$ as the component wise exponential of a matrix. That is,
\begin{align*}
	(\exp\odot (X))_{ij} :=\exp (X_{ij}).
\end{align*}
\begin{corollary}
	The Hadamard exponential of a Euclidean distance matrix, $D$ is positive semidefinite. That is
	\begin{align}
			\exp \odot (D) \geq 0.
	\end{align}
	\label{corr:EDM}
\end{corollary}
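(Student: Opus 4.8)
The plan is to recognize the Hadamard exponential of a Euclidean distance matrix directly as a Gram matrix of multi-mode coherent states, from which positive semidefiniteness is immediate. By definition, a Euclidean distance matrix $D$ admits real vectors $\lbrace \alpha_i \rbrace_{i=1}^n \subset \mathbb{R}^n$ with $D_{ij} = -\frac{1}{2}\Vert \alpha_i - \alpha_j \Vert_2^2$. The essential observation is that this is precisely the form appearing in Eq. \ref{eq:ExpMaster2} once all phases are set to zero: for \emph{real} amplitude vectors the quantity $\langle \alpha_i, \alpha_j \rangle$ is real, so $\text{Im}\lbrace \langle \alpha_i, \alpha_j \rangle \rbrace = 0$ and the entire imaginary part of the exponent vanishes.

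Concretely, I would regard each $\alpha_i$ as an element of $\mathbb{C}^n$ and form the multi-mode coherent states $\mket{\alpha_i}$ with trivial phase $\phi_i = 0$. Evaluating Eq. \ref{eq:ExpMaster2} with these choices gives
\begin{align*}
    \left\langle \mket{\alpha_i}, \mket{\alpha_j} \right\rangle = \exp\left( -\frac{1}{2} \Vert \alpha_i - \alpha_j \Vert_2^2 \right) = \exp(D_{ij}) = \left( \exp\odot (D) \right)_{ij}.
\end{align*}
Hence $\exp\odot (D) = G\left( \mket{\alpha_1}, \mket{\alpha_2}, \cdots, \mket{\alpha_n} \right)$, and since every Gram matrix is positive semidefinite, this establishes $\exp\odot (D) \geq 0$.

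There is no genuine obstacle here: the corollary is essentially a repackaging of the characterization, with the Euclidean distance matrix being the purely real, zero-phase special case. If one prefers to route the argument through Theorem \ref{thm:GcharNew} instead, the internal consistency is worth recording. Using the principal branch, $\log\odot(\exp\odot(D)) = D$ is real symmetric with zero diagonal (as $D_{ii}=0$), so one may take $N=0$, and the condition of Eq. \ref{eq:GcharNew2} with $s = u/n$ reduces exactly to Gower's test, Eq. \ref{eq:DistMatrixTest}, which holds by hypothesis. This is nothing but statement (\rom{3}) of Lemma \ref{lemma:Qchar} specialized to $s = u/n$, confirming that $D$ satisfies statement (\rom{1}) with real amplitudes and zero phases, and closing the argument along the same lines.
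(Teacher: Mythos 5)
Your proof is correct, and its primary route is genuinely different from the paper's. The paper proves this corollary as an application of Theorem \ref{thm:GcharNew}: it sets $P := \exp\odot(D)$, notes that with the $[-\pi,\pi)$ branch $\log\odot(P) = D$ (legitimate since $P_{ij} > 0$), and then chains Gower's test, Eq.~\ref{eq:DistMatrixTest} --- the \emph{iff} characterization of EDMs --- through Eq.~\ref{eq:GcharNew2} (implicitly with $N=0$) to conclude $P \in G(\mathscr{C}_n^n)$, hence $P \geq 0$. You instead argue directly from the paper's definition of an EDM: take real vectors $\lbrace \alpha_i \rbrace_{i=1}^n$ realizing $D_{ij} = -\frac{1}{2}\Vert \alpha_i - \alpha_j \Vert_2^2$, and observe via Eq.~\ref{eq:ExpMaster2} that for real amplitudes and zero phases the imaginary part of the exponent vanishes, so that $\exp\odot(D) = G\left(\mket{\alpha_1}, \cdots, \mket{\alpha_n}\right)$ explicitly. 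This is more elementary: it needs neither Theorem \ref{thm:GcharNew} nor Gower's characterization, only the definition and the coherent-state overlap formula, and it makes transparent \emph{why} the result holds --- the Hadamard exponential of an EDM literally \emph{is} a Gram matrix of coherent states. It is in fact the same construction the paper itself uses later, in the converse direction of its corollary on Gram matrices with real positive entries. What the paper's route buys is narrative rather than logical economy: the corollary is presented there as a demonstration that the characterization theorem is usable, so it deliberately passes through Eq.~\ref{eq:GcharNew2}. Your closing paragraph, rerunning the argument through Theorem \ref{thm:GcharNew} with $N = 0$, the principal branch, and $s = u/n$, reproduces the paper's proof essentially verbatim, so you have in effect supplied both arguments; the only point worth pinning down there is that $D_{ii} = 0$ guarantees $P_{ii} = 1$ and $2\pi i N_{ii} = \log(1)$, which your choice of branch and $N=0$ indeed satisfy.
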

\begin{proof}
	Let $P := \exp \odot (D)$. Since, $D$ is a Euclidean distance matrix, it is a real matrix and $P_{ij}>0$ for every $i,j$. If we choose the $\log$ function to be the logarithm with the branch $[-\pi, \pi)$, then $\log \odot (P) = D$. Using this fact, we have
	\begin{align*}
		& \left( \mathds{1} - \frac{uu^\dagger}{n} \right) D \left( \mathds{1} - \frac{uu^\dagger}{n} \right) \geq 0 \\
		\Rightarrow & \left( \mathds{1} - \frac{uu^\dagger}{n} \right) (\log \odot (P)) \left( \mathds{1} - \frac{uu^\dagger}{n} \right) \geq 0 \\
		\Rightarrow & \ P \in G(\mathscr{C}_n^n) \\
		\Rightarrow & \ P \geq 0 \\
		\Rightarrow & \ \exp \odot (D) \geq 0,
	\end{align*}
	where we have used Theorem \ref{thm:GcharNew} in the third step and the fact that Gram matrices are positive semidefinite in the fourth step.
\end{proof}

One can, in principle use Theorem \ref{thm:GcharNew} to not only check if a given square matrix, $P \in \text{L}(\mathbb{C}^n)$ (such that $P_{ii}=1$ for every $i \in [n]$) can be represented as a Gram matrix of coherent states or not, but also to find a set of coherent states, $\lbrace e^{i \phi_i}\mket{\alpha_i} \rbrace_i$ such that ${P=G(e^{i \phi_1}\mket{\alpha_1},\ \cdots\ , e^{i \phi_n}\mket{\alpha_n})}$. One can use the following algorithm for example. Choose $\log$ to be the logarithm with the branch $[-\pi, \pi)$. For the given matrix $P$, calculate $\delta := \min_{i,j} \vert P_{ij} \vert $. Let
\begin{align*}
	\mathcal{F}:=& \lbrace N\in \mathbb{Z}^{n \times n} : \\
	& N_{ij} \in \mathbb{Z} \cap \left[ -\frac{3}{2} - \frac{1}{\pi}\vert \log (\delta) \vert , \frac{3}{2} + \frac{1}{\pi}\vert \log (\delta) \vert \right)\\
	& \text{ for every  } i \neq j \in [n],\ 2 \pi i N_{ii} = \log(1) \\
	& \text{ for every  } i \in [n] \text{ and } \\
	& (\log \odot (P) -2 \pi i N) \in \text{Herm}(\mathbb{C}^n) \rbrace.
\end{align*}
be the set of possible integer matrices. Note that this set is finite. For every $N \in \mathcal{F}$ do the following:
\begin{enumerate}
	\item Let ${X:= ( \mathds{1} - u e_1^\dagger) \left( \log\odot (P) -2\pi i N \right) ( \mathds{1} - e_1 u^\dagger)}$.
	\item Check if $X$ is positive semidefinite or not.
		\begin{enumerate}
			\item If it is positive semidefinite, then Theorem \ref{thm:GcharNew} (we are using an alternate but equivalent version of the condition in Eq. \ref{eq:GcharNew2} for convenience) guarantees that you can write your matrix as a Gram matrix of multi-mode coherent states. Moreover, these coherent states can be found by using the columns of $X^{1/2}$ (this can be seen through the proof of Lemma \ref{lemma:Qchar}). If
			\begin{align*}
				X^{1/2} = \left( \alpha_1\ \alpha_2\ \cdots\ \alpha_n\right)
			\end{align*}
			where ${\alpha_i \in \mathbb{C}^n}$ for each ${i \in [n]}$, then we have ${P = G(e^{i \phi_1} \mket{\alpha_1}, e^{i \phi_2} \mket{\alpha_2},\ \cdots\ ,e^{i \phi_n} \mket{\alpha_n})}$ for ${\phi_i := -\text{Im}\lbrace \log P_{i1} \rbrace}$ for each ${i \in [n]}$.
			\item  If it is not positive semidefinite, move to the next integer matrix in $\mathcal{F}$. 
		\end{enumerate}
\end{enumerate}
Finally, if $X$ is not positive semidefinite for any of the integer matrices $N \in \mathcal{F}$, then $P \notin G(\mathscr{C}_{n}^n)$.\\

\emph{Remark: } If we replace $e_1$ with $u/n$ in the above algorithm, then $X$ and $X^{1/2}$ will have rank at most $(n-1)$, which means that one could in fact choose vectors in $\mathbb{C}^{n-1}$ to satisfy Eq. \ref{eq:ExpMaster1}. Thus, we can choose $(n-1)-$mode coherent states to construct any matrix in $G(\mathscr{C}_n^n)$ or, ${G(\mathscr{C}_n^n) = G(\mathscr{C}_{n-1}^n)}$. One can also see this through Lemma \ref{lemma:GAlphaZero}, which allows us to reduce the rank of the amplitude vectors by at least 1. We state this result as Corollary \ref{corr:GOnlyN2}, a slightly strengthened form of Corollary \ref{corr:GOnlyN}. 

\begin{corPrime}{corr:GOnlyN}
	For $n,m \in \mathbb{N}$ and $n \geq 2$, ${G\left( \mathscr{C}_{n-1+m}^{n}\right) = G\left( \mathscr{C}_{n-1}^{n}\right)}$.
	\label{corr:GOnlyN2}
\end{corPrime}

However, we will continue to use $G\left( \mathscr{C}_{n}^{n}\right)$ to represent the Gram matrices of $n$ multi-mode coherent states for notational convenience. \\

Since, it seems hard to decide if a Gram matrix $P$ lies in $G(\mathscr{C}_n^n)$ or not, we provide two Corollaries, which would be helpful in deciding this question in certain cases. Both of these rely on the fact that it is easy to check if a matrix is a Euclidean distance matrix (EDM) (see Eq. \ref{eq:DistMatrixTest}). 
\begin{corollary}
	If $P \in G(\mathscr{C}_n^n)$, then the matrix ${R:=[\log(\vert P_{ij} \vert)]_{ij}}$ is a Euclidean distance matrix ($\log$ function considered here is the real logarithm function). \label{corr:NewCorr1}
\end{corollary}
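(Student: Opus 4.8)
The plan is to read $R$ directly off the exponential representation of $P$ and then recognize it as a Euclidean distance matrix. First I would invoke the hypothesis $P \in G(\mathscr{C}_n^n)$ together with Eq.~\ref{eq:ExpMaster2}: there exist amplitude vectors $\lbrace \alpha_i \rbrace_{i=1}^n \subset \mathbb{C}^n$ and real phases $\lbrace \phi_i \rbrace_{i=1}^n$ with
\begin{align*}
P_{ij} = \exp\!\left(-\tfrac{1}{2}\Vert \alpha_i - \alpha_j\Vert_2^2 + i\left(\text{Im}\lbrace\langle \alpha_i,\alpha_j\rangle\rbrace + (\phi_j-\phi_i)\right)\right).
\end{align*}
Since the second term in the exponent is purely imaginary, it affects only the phase of $P_{ij}$, so $\vert P_{ij}\vert = \exp(-\tfrac12\Vert\alpha_i-\alpha_j\Vert_2^2)$ and hence $R_{ij} = \log\vert P_{ij}\vert = -\tfrac12\Vert\alpha_i-\alpha_j\Vert_2^2$. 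This already has the exact algebraic form of a distance-matrix entry.

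Next I would address the only genuine subtlety: the paper's definition of a Euclidean distance matrix uses real vectors, whereas the $\alpha_i$ live in $\mathbb{C}^n$. The cleanest fix is that $\mathbb{C}^n$ embeds isometrically into $\mathbb{R}^{2n}$ by separating real and imaginary parts, and this embedding preserves the Euclidean $2$-norm, so $\Vert\alpha_i-\alpha_j\Vert_2 = \Vert\tilde\alpha_i-\tilde\alpha_j\Vert_2$ for the associated real vectors $\tilde\alpha_i$. As $n$ points always span an affine subspace of dimension at most $n-1$, they can be placed isometrically in $\mathbb{R}^n$, exhibiting $R$ as a bona fide Euclidean distance matrix.

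An alternative route, which I would actually prefer for its brevity and for its use of the machinery already established, goes through the test in Eq.~\ref{eq:DistMatrixTest}. Writing $Q := \log\odot(P) - 2\pi i N$ for the Hermitian matrix supplied by Theorem~\ref{thm:GcharNew}, one has $R = \text{Re}(Q)$, because $2\pi i N$ is purely imaginary and $R_{ij}$ is the real part of $\log P_{ij}$. Theorem~\ref{thm:GcharNew} gives $\Pi Q \Pi \geq 0$ with the real symmetric projector $\Pi := \mathds{1} - uu^\dagger/n$. Since $\Pi$ is real, $\Pi R \Pi = \Pi\,\text{Re}(Q)\,\Pi = \text{Re}(\Pi Q \Pi)$, and the real part of a positive semidefinite Hermitian matrix is positive semidefinite as a real symmetric matrix (for any real $v$, one has $v^\dagger M v$ real, whence $v^T\text{Re}(M)v = v^\dagger M v \geq 0$). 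Thus $\Pi R \Pi \geq 0$, which is precisely Eq.~\ref{eq:DistMatrixTest}; combined with the immediate facts that $R$ is real symmetric (as $P$ is Hermitian, $\vert P_{ij}\vert = \vert P_{ji}\vert$) with vanishing diagonal (as $P_{ii}=1$), this finishes the argument.

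The step requiring the most care is not any single computation but the bookkeeping around the complex-versus-real distinction: one must check that taking magnitudes genuinely annihilates the imaginary contribution and that the resulting complex-norm distance matrix still qualifies as Euclidean under the paper's real-vector definition. Both the isometric embedding and the ``real part of a positive semidefinite matrix is positive semidefinite'' observation dispose of this cleanly, so the obstacle is conceptual rather than technical.
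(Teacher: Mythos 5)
Your proposal is correct, and your first route is, almost word for word, the paper's own proof: the paper reads $\log\vert P_{ij}\vert = -\tfrac{1}{2}\Vert\alpha_i-\alpha_j\Vert_2^2$ off Eq.~\ref{eq:ExpMaster2}, replaces each $\alpha_i$ by the real $2n$-dimensional vector $\left(\mathrm{Re}\lbrace\alpha_i\rbrace, \mathrm{Im}\lbrace\alpha_i\rbrace\right)$, and rotates the $n$ resulting points isometrically into $\mathbb{R}^n$, exactly as you describe. Your preferred second route, however, is genuinely different from the paper and is also sound: rather than exhibiting a real point configuration, you pull the Hermitian matrix $Q = \log\odot(P) - 2\pi i N$ out of Theorem~\ref{thm:GcharNew}, note that $R = \mathrm{Re}(Q)$ because $2\pi i N$ is purely imaginary and $\mathrm{Re}\log P_{ij} = \log\vert P_{ij}\vert$ on every branch (this branch-independence is what makes the choice of $N$ irrelevant), and then use two correct observations — conjugation by the real projector $\mathds{1} - uu^\dagger/n$ commutes with taking entrywise real parts, and the real part of a positive semidefinite Hermitian matrix is positive semidefinite — to land on Gower's criterion, Eq.~\ref{eq:DistMatrixTest}. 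You also rightly verify the implicit hypotheses of that criterion ($R$ real symmetric and hollow, via $\vert P_{ij}\vert = \vert P_{ji}\vert$ and $R_{ii} = \log 1 = 0$), which the paper's statement of the test leaves tacit. The trade-off between the two: the paper's argument is constructive and hands you the real configuration realizing $R$ — which the paper immediately reuses to prove the partial converse for entrywise-positive real $P$ in the corollary that follows — whereas your second argument is a clean formal deduction showing Corollary~\ref{corr:NewCorr1} is essentially Theorem~\ref{thm:GcharNew} composed with Gower's test, at the cost of invoking the full characterization machinery (existence of the integer matrix $N$) where the elementary modulus computation already suffices. Nothing is missing from either route.
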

\begin{proof}
	If $P \in G(\mathscr{C}_n^n)$, then for every $i,j \in [n]$
	\begin{align}
		\log(\vert P_{ij} \vert) = -\frac{1}{2} \Vert \alpha_i - \alpha_j \Vert_2^2
		\label{eq:Corr1Eq}
	\end{align}
		for some complex vectors, $\lbrace \alpha_k \rbrace_{k=1}^n \subset \mathbb{C}^n$. Define, a $2n$- dimensional vector $\beta_i := (\text{Re}\lbrace \alpha_i \rbrace, \text{Im}\lbrace \alpha_i \rbrace)$ for every ${i \in [n]}$. These vectors would also satisy Eq. \ref{eq:Corr1Eq}. Further these can be rotated into a n-dimensional space using an orthogonal matrix, which would leave the distance between these vectors invariant. Hence, the matrix $R$ would be an EDM. 
\end{proof}
One may wonder if the converse also holds in Corollary \ref{corr:NewCorr1}, i.e., given that for a Gram matrix $P$, the matrix ${R:=[\log(\vert P_{ij} \vert)]_{ij}}$ is a Euclidean distance matrix, does this imply $P \in G(\mathscr{C}_n^n)$? This is seen not to be the case. A counterexample is the Gram matrix of three equiangular vectors in two dimensions,
\begin{align*}
	P = 
	\begin{pmatrix}	
		1 & -0.5 &  -0.5 \\
		 -0.5 & 1 &  -0.5 \\
		  -0.5 &  -0.5 & 1
	\end{pmatrix}.
\end{align*}
The matrix ${R:=[\log(\vert P_{ij} \vert)]_{ij}}$ in this case is an EDM, but one can use the algorithm based on Theorem \ref{thm:GcharNew} to show that $P \notin G(\mathscr{C}_n^n)$. However, the converse does hold when all the elements of the Gram matrix $P$ are real and positive. 
\begin{corollary}
	If $P \in \text{Herm}(\mathbb{C}^n)$ and $P_{ij} \in \mathbb{R}$ and $P_{ij} >0$ for every $i,j \in [n]$, then $P \in G(\mathscr{C}_n^n)$ if an only if $\log \odot P$ is a Euclidean distance matrix ($\log$ function considered here is the real logarithm function). 
\end{corollary}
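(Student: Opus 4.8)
The plan is to treat the two implications separately, since both reduce to results already established in the excerpt. For the forward direction ($P \in G(\mathscr{C}_n^n)$ implies $\log\odot P$ is a Euclidean distance matrix), I would simply invoke Corollary \ref{corr:NewCorr1}, which already tells us that $R := [\log(|P_{ij}|)]_{ij}$ is a Euclidean distance matrix whenever $P \in G(\mathscr{C}_n^n)$. Since here $P_{ij} \in \mathbb{R}$ with $P_{ij} > 0$, we have $|P_{ij}| = P_{ij}$ and therefore $R = \log\odot P$, so no further work is required.

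For the converse ($\log\odot P$ a Euclidean distance matrix implies $P \in G(\mathscr{C}_n^n)$), I would apply the characterization of Theorem \ref{thm:GcharNew} with the simplest admissible choice of the integer matrix, namely $N = 0$. Writing $D := \log\odot P$ with the real logarithm, I would first observe that because the entries $P_{ij}$ are positive reals, $D$ is a real symmetric matrix, hence Hermitian, and the branch ambiguity that Theorem \ref{thm:GcharNew} must otherwise contend with is absent. The hypothesis that $D$ is a Euclidean distance matrix forces $D_{ii} = -\frac{1}{2}\Vert \alpha_i - \alpha_i \Vert_2^2 = 0$, i.e. $P_{ii} = 1$, which supplies exactly the diagonal normalization needed to apply the theorem. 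With $N = 0$ we then have $2\pi i N_{ii} = 0 = \log(1)$ and $\log\odot P - 2\pi i N = D$ Hermitian, so the only remaining requirement is Eq. \ref{eq:GcharNew2}, which collapses to $\left(\mathds{1} - \frac{uu^\dagger}{n}\right) D \left(\mathds{1} - \frac{uu^\dagger}{n}\right) \geq 0$. But this is precisely the Euclidean-distance-matrix test of Eq. \ref{eq:DistMatrixTest}, and it holds by assumption. Theorem \ref{thm:GcharNew} then yields $P \in G(\mathscr{C}_n^n)$.

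There is essentially no hard step here; the real content is recognizing that positivity of the entries collapses all of the machinery of Theorem \ref{thm:GcharNew} onto the single choice $N = 0$, converting its positivity condition into the classical EDM test. The only point demanding a moment of care is confirming that $N = 0$ is admissible, i.e. that $\log\odot P$ is genuinely Hermitian with vanishing diagonal, which is exactly where the hypotheses $P_{ij} \in \mathbb{R}$, $P_{ij} > 0$ and the zero-diagonal property of Euclidean distance matrices enter. I would note in passing that the converse could also be approached through Corollary \ref{corr:EDM}, giving $\exp\odot(D) = P \geq 0$; however, this only recovers positive semidefiniteness and not membership in $G(\mathscr{C}_n^n)$, so I would favor the direct route through Theorem \ref{thm:GcharNew}.
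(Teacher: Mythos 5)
Your proof is correct, and its converse direction takes a genuinely different route from the paper's. The forward direction is identical: the paper also just specializes Corollary \ref{corr:NewCorr1}, noting $\vert P_{ij}\vert = P_{ij}$. For the converse, however, the paper argues by direct construction rather than through the characterization theorem: since $\log\odot P$ is an EDM, by definition there exist real vectors $\lbrace \alpha_i \rbrace_{i=1}^n \subset \mathbb{R}^n$ with $\log P_{ij} = -\frac{1}{2}\Vert \alpha_i - \alpha_j\Vert_2^2$, and because the $\alpha_i$ are real, $\text{Im}\lbrace\langle \alpha_i,\alpha_j\rangle\rbrace = 0$, so Eq. \ref{eq:ExpMaster2} with all phases zero immediately gives $P = G(\mket{\alpha_1},\ \cdots\ , \mket{\alpha_n})$. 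You instead funnel the EDM hypothesis through Theorem \ref{thm:GcharNew} with $N=0$, which amounts to using the test side (Eq. \ref{eq:DistMatrixTest}) of the same EDM equivalence that the paper uses on the definition side; your admissibility checks for $N=0$ (real symmetric hence Hermitian $\log\odot P$, vanishing diagonal from the EDM property forcing $P_{ii}=1$, and trivially $\delta = \min_{i,j} P_{ij} > 0$) are exactly the right ones and are handled correctly. The paper's route buys explicitness: it exhibits the realizing coherent states outright, showing real amplitude vectors with no phases suffice, and lands in $G(\mathscr{C}_n^n)$ directly since the EDM realization lives in $\mathbb{R}^n$. Your route is more uniform with the paper's general machinery, but note one small imprecision: Theorem \ref{thm:GcharNew} as stated yields $P \in G(\mathscr{C}_m^n)$ for some $m \in \mathbb{N}$, so strictly you should append Corollary \ref{corr:GOnlyN} to conclude $P \in G(\mathscr{C}_n^n)$ — a one-line fix, not a gap. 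Your closing observation that Corollary \ref{corr:EDM} would only recover positive semidefiniteness, not membership, is accurate and shows the right judgment in preferring the direct argument.
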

\begin{proof}
	The fact that if $P \in G(\mathscr{C}_n^n)$ then $\log \odot P$ is a Euclidean distance matrix can be seen using Corollary \ref{corr:NewCorr1} in this case. For the other direction, since $\log \odot P$ is an EDM, we can choose vectors $\lbrace \alpha_i \rbrace_{i=1}^n \subset \mathbb{R}^n$ such that for every $i,j \in [n]$
	\begin{align*}
		& \log P_{ij} = -\frac{1}{2}\Vert \alpha_i - \alpha_j\Vert_2^2 \\
		 \Leftrightarrow \ & P_{ij} =\exp \left(  -\frac{1}{2}\Vert \alpha_i - \alpha_j\Vert_2^2 \right),
	\end{align*}
	which implies that $P = G(\mket{\alpha_1},\ \cdots \ , \mket{\alpha_n})$ using the fact that $\alpha_i $ are real vectors and Eq. \ref{eq:ExpMaster2}.
\end{proof}

We can also connect our work to the coherent state mapping presented in Ref. \cite{Arrazola142}. We will assume that the entries of the Gram matrix, $P$ are close to $1$ (and $P_{ii}=1$ for all $ i \in [n]$). That is, the angles between the vectors forming the Gram matrix are small. In order to recreate this Gram matrix using multi-mode coherent states, we need to find amplitude vectors, $\lbrace \alpha_i \rbrace_{i=1}^n \subset \mathbb{C}^n$, a phase vector, $\phi \in \mathbb{R}^n$ and an integer matrix such that Eq. \ref{eq:GcharNew2} is satisfied. We choose the $[-\pi, \pi)$ branch of the $\log$ function. Further, for convenience we assume that no element of $P$ lies on the negative real axis. In this case, $\log \odot P$ is Hermitian, with zero diagonal. We will choose, $\lbrace \alpha_i \rbrace_{i=1}^n \subset \mathbb{C}^n$, such that $\Vert \alpha_i \Vert_2=1$ for all $i \in [n] $, $\phi = 0$, and $N=0$. For these choice, we need $\lbrace \alpha_i \rbrace_{i=1}^n$ satisfying
\begin{align*}
	\log (P_{ij}) & = - \frac{1}{2} \left( \Vert \alpha_i \Vert_2^2 + \Vert \alpha_j \Vert_2^2 - 2 \langle \alpha_i, \alpha_j \rangle\right)+  i \left( \phi_j -\phi_i \right)	\\
	& = - \frac{1}{2} \left( 2 - 2 \langle \alpha_i, \alpha_j \rangle\right).
\end{align*}
Using the assumption that the entries of $P$ are close to $1$, 
\begin{align*}
	& P_{ij}-1  + O ((P_{ij}-1)^2) = -1 + \langle \alpha_i, \alpha_j \rangle \\
	\Rightarrow &\ P_{ij} \approx \langle \alpha_i, \alpha_j \rangle.
\end{align*}
Thus, the problem of finding coherent states in this case reduces to finding unit vectors forming the Gram matrix, $P$. This can be done easily by choosing $\lbrace \alpha_i \rbrace_{i=1}^n$ to be the columns of $B$ for any $B$ such that $P= B^\dagger B$. Ref. \cite{Arrazola142} studies exactly this mapping of qudit states ($\lbrace \alpha_i \rbrace_i$) to multi-mode coherent states ($\lbrace \mket{\alpha_i} \rbrace_i$). Our results show that this is indeed well motivated. 
\end{sloppypar}

\section{Closure of the set of Gram matrices of multi-mode coherent states} \label{sec:Gclosure}

The set of Gram matrices of $n$ multi-mode coherent states, $G(\mathscr{C}_n^n)$ is not closed. For example, one may construct Gram matrices arbitrarily close to the identity matrix using coherent states, but the identity matrix itself cannot be constructed, since the inner products between any two coherent states is never zero. In this section, we will characterize the closure of $G(\mathscr{C}_n^n)$, which we will represent as $G\overline{(\mathscr{C}_n^n)}$. This set consists of the Gram matrices which can be approximated arbitrarily well using Gram matrices of coherent states. Experimentally $G\overline{(\mathscr{C}_n^n)}$ is more relevant than $G(\mathscr{C}_n^n)$. One expects that block diagonal Gram matrices, where each of the blocks is a Gram matrix of coherent states, would lie in $G\overline{(\mathscr{C}_n^n)}$. Each block could be realized by the set of corresponding coherent states, and one could displace the amplitudes between the sets relative to each other with a sufficiently large amplitude vector to achieve this. In fact, we will show that all the matrices in $G\overline{(\mathscr{C}_n^n)}$ can be put into such a block diagonal form. To prove this, we will require two intermediate results, Lemmas \ref{lemma:bounds} and \ref{lemma:NonZeroP}. Lemma \ref{lemma:bounds} relates the distance between the amplitude vectors of two coherent states with their inner-product with each other and a third coherent state. Lemma \ref{lemma:NonZeroP} shows that if a Gram matrix with non-zero entries belongs in $G\overline{(\mathscr{C}_n^n)}$, then it also belongs in $G{(\mathscr{C}_n^n)}$. Together these two will allow us to characterize $G\overline{(\mathscr{C}_n^n)}$.

\begin{sloppypar}
\begin{lemma}
	If $\mket{\alpha}, \mket{\beta}, \mket{\gamma} \in \mathscr{C}_n$, are such that $\left\vert \left\langle \mket{\alpha}, \mket{\beta} \right\rangle \right\vert = p_{\alpha \beta} $ and $\left\vert \left\langle \mket{\alpha}, \mket{\gamma} \right\rangle \right\vert  = p_{\alpha \gamma}$, then
	\begin{align}
		\left(-2 \log p_{\alpha \gamma} \right)^{1/2}&  + \left(-2 \log p_{\alpha \beta} \right)^{1/2}  \geq \Vert \beta - \gamma \Vert_2 \nonumber \\
		& \geq  \left\vert \left(-2 \log p_{\alpha \gamma} \right)^{1/2} - \left(-2 \log p_{\alpha \beta} \right)^{1/2} \right\vert.
	\end{align}
	\label{lemma:bounds}
\end{lemma}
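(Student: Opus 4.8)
The plan is to reduce the statement entirely to the ordinary triangle inequality for the Euclidean norm on the amplitude vectors. The key observation is that Eq.~\ref{eq:ExpMaster2} tells us how the \emph{magnitude} of an inner product between two coherent states is determined: for any two amplitude vectors $\alpha, \beta \in \mathbb{C}^n$, the phase factor has modulus one, so
\begin{align*}
	\left\vert \left\langle \mket{\alpha}, \mket{\beta} \right\rangle \right\vert = \exp\left( -\frac{1}{2} \Vert \alpha - \beta \Vert_2^2 \right).
\end{align*}
First I would apply this to the two given hypotheses. Writing $p_{\alpha\beta} = \exp(-\tfrac{1}{2}\Vert \alpha - \beta \Vert_2^2)$ and inverting, one obtains the clean identity $\Vert \alpha - \beta \Vert_2 = (-2\log p_{\alpha\beta})^{1/2}$, and identically $\Vert \alpha - \gamma \Vert_2 = (-2\log p_{\alpha\gamma})^{1/2}$. (Here the real logarithm is used, and the expressions under the square roots are nonnegative since $p_{\alpha\beta}, p_{\alpha\gamma} \in (0,1]$.)

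With this translation in hand, the quantities appearing in the lemma acquire a purely geometric meaning: $(-2\log p_{\alpha\gamma})^{1/2}$ and $(-2\log p_{\alpha\beta})^{1/2}$ are simply the Euclidean distances from $\alpha$ to $\gamma$ and from $\alpha$ to $\beta$, respectively, while $\Vert \beta - \gamma \Vert_2$ is the distance between the remaining pair. The entire claim is therefore equivalent to
\begin{align*}
	\Vert \alpha - \gamma \Vert_2 + \Vert \alpha - \beta \Vert_2 \geq \Vert \beta - \gamma \Vert_2 \geq \left\vert \Vert \alpha - \gamma \Vert_2 - \Vert \alpha - \beta \Vert_2 \right\vert,
\end{align*}
which is nothing more than the triangle inequality together with its reverse form, applied to the three vectors $\alpha, \beta, \gamma \in \mathbb{C}^n$ viewed as points in a real $2n$-dimensional Euclidean space. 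The upper bound follows by writing $\beta - \gamma = (\beta - \alpha) + (\alpha - \gamma)$ and applying subadditivity of the norm; the lower bound is the reverse triangle inequality.

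There is no genuine obstacle in this argument — the only substantive step is recognizing that the modulus of the coherent-state overlap encodes the Euclidean distance between amplitude vectors, after which the result is immediate. The main thing to be careful about is bookkeeping: ensuring that $p_{\alpha\beta}$ and $p_{\alpha\gamma}$ lie in $(0,1]$ so that the square roots are well defined (guaranteed since overlaps of coherent states are nonzero and have modulus at most one), and correctly matching which of the two root expressions corresponds to which pair of amplitude vectors so that the triangle inequality is applied to the right edges of the triangle formed by $\alpha$, $\beta$, and $\gamma$.
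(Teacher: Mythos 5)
Your proposal is correct and follows essentially the same route as the paper's own proof: both extract the identities $\Vert \alpha - \beta \Vert_2 = (-2\log p_{\alpha\beta})^{1/2}$ and $\Vert \alpha - \gamma \Vert_2 = (-2\log p_{\alpha\gamma})^{1/2}$ from Eq.~\ref{eq:ExpMaster2} and then apply the triangle inequality and its reverse form to the vectors $\alpha$, $\beta$, $\gamma$. Your added remarks on well-definedness of the square roots (overlaps lying in $(0,1]$) are a small bookkeeping bonus not spelled out in the paper, but the substance of the argument is identical.
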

\begin{proof}
	From Eq. \ref{eq:ExpMaster2}, the following can be deduced:
	\begin{align*}
		\Vert \alpha - \beta \Vert_2 &= \left( -2 \log \left( p_{\alpha \beta} \right) \right)^{1/2}  \\
		\Vert \alpha - \gamma \Vert_2 &= \left( -2 \log \left( p_{\alpha \gamma} \right) \right)^{1/2}.
	\end{align*}
	To establish the lower bound, we use the triangle inequality in the following manner. 
	\begin{align*}
		\Vert \gamma - \beta \Vert_2 &= \Vert \left( \gamma- \alpha \right) - \left( \beta- \alpha \right)\Vert_2 \\
		&\geq \vert \Vert \gamma- \alpha\Vert_2 - \Vert \beta- \alpha \Vert_2 \vert \\
		&\geq \vert \left(-2 \log p_{\alpha \gamma} \right)^{1/2} - \left(-2 \log p_{\alpha \beta} \right)^{1/2} \vert
	\end{align*}
	We use the triangle inequality again to establish the upper bound.
	\begin{align*}
		\Vert \gamma - \beta \Vert_2 &\leq \Vert \gamma- \alpha \Vert_2  +  \Vert \beta- \alpha \Vert_2 \\
		&= \left(-2 \log p_{\alpha \gamma} \right)^{1/2} + \left(-2 \log p_{\alpha \beta} \right)^{1/2}
	\end{align*}
\end{proof}
Before we go on further, we would like to point out that if ${P \in G(\mathscr{C}_n^n)}$, then ${P_{ij} \neq 0}$ for all ${i,j \in [n]}$. This can be seen from Eq. \ref{eq:ExpMaster1}. In the lemma that follows, we prove that if a matrix with non-zero entries belongs in $G\overline{(\mathscr{C}_n^n)}$, then it also belongs in $G{(\mathscr{C}_n^n)}$.

\begin{lemma}
	For $P \in \text{L}(\mathbb{C}^n)$ such that $ P_{ii}=1$ for all $i \in [n]$, if $ P_{ij}\neq 0$ for all ${i,j \in [n]}$, then $P \in \overline{G(\mathscr{C}_n^n)}$ if and only if $P \in G(\mathscr{C}_n^n)$.
	\label{lemma:NonZeroP}
\end{lemma}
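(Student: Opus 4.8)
The plan is to dispatch the two directions asymmetrically. The reverse implication, $P \in G(\mathscr{C}_n^n) \Rightarrow P \in \overline{G(\mathscr{C}_n^n)}$, is immediate since any set is contained in its closure. All the work is in the forward implication: assuming $P$ has all non-zero entries and lies in $\overline{G(\mathscr{C}_n^n)}$, I want to exhibit an honest coherent-state representation of $P$. The strategy is a compactness argument: take a sequence $P^{(k)} \in G(\mathscr{C}_n^n)$ with $P^{(k)} \to P$, show that the parametrizing data (amplitude vectors and phases) can be chosen to live in a fixed compact set, extract a convergent subsequence, and pass to the limit using continuity of the Gram-matrix map.

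First I would fix representations $P^{(k)} = G(e^{i\phi_1^{(k)}}\mket{\alpha_1^{(k)}}, \dots, e^{i\phi_n^{(k)}}\mket{\alpha_n^{(k)}})$ with $\alpha_i^{(k)} \in \mathbb{C}^n$, which is legitimate since Corollary \ref{corr:GOnlyN} guarantees that $n$ modes suffice. Using Lemma \ref{lemma:GAlphaZero} I would normalize $\alpha_1^{(k)} = 0$ for every $k$, and take each phase $\phi_i^{(k)} \in [0, 2\pi]$. The key step is bounding the amplitude vectors. Set $\delta := \min_{i,j} |P_{ij}| > 0$ (positive because the entries are finitely many and non-zero). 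From Eq. \ref{eq:ExpMaster2}, $|P^{(k)}_{1i}| = \exp(-\tfrac{1}{2}\|\alpha_i^{(k)}\|_2^2)$ since $\alpha_1^{(k)} = 0$. Because $P^{(k)}_{1i} \to P_{1i}$, for all large $k$ the moduli $|P^{(k)}_{1i}|$ exceed $\delta/2 > 0$, which forces $\|\alpha_i^{(k)}\|_2^2 \leq -2\log(\delta/2)$. Hence every $\alpha_i^{(k)}$ lies in a fixed closed ball of $\mathbb{C}^n$, while each $\phi_i^{(k)}$ lies in the compact interval $[0,2\pi]$.

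With all parameters confined to a compact set, the Bolzano--Weierstrass theorem yields a subsequence along which $\alpha_i^{(k)} \to \alpha_i$ and $\phi_i^{(k)} \to \phi_i$ for each $i$. The map sending amplitude vectors and phases to the matrix entries defined in Eq. \ref{eq:ExpMaster1} is continuous, so along this subsequence $P^{(k)} \to G(e^{i\phi_1}\mket{\alpha_1}, \dots, e^{i\phi_n}\mket{\alpha_n})$. Since also $P^{(k)} \to P$, uniqueness of limits gives $P = G(e^{i\phi_1}\mket{\alpha_1}, \dots, e^{i\phi_n}\mket{\alpha_n}) \in G(\mathscr{C}_n^n)$, as desired.

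The step I expect to be the crux --- and the only place the non-vanishing hypothesis enters --- is the boundedness of the amplitude vectors. The assumption $P_{ij} \neq 0$ is precisely what prevents the pairwise distances $\|\alpha_i^{(k)} - \alpha_j^{(k)}\|_2 = (-2\log|P^{(k)}_{ij}|)^{1/2}$ from diverging as $k \to \infty$; if some $P_{ij}$ vanished, the corresponding amplitudes would run off to infinity and no convergent subsequence of representations could exist. This is exactly the mechanism by which matrices such as the identity lie in $\overline{G(\mathscr{C}_n^n)}$ but not in $G(\mathscr{C}_n^n)$. Everything else is routine compactness and continuity.
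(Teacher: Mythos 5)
Your proof is correct, but it takes a genuinely different route from the paper's. The paper never touches the amplitude--phase parametrization in its argument: it works entirely inside the matrix-analytic characterization of Theorem~\ref{thm:GcharNew}, associating to each $P_k$ an integer branch matrix $N_k$ whose entries are confined (for large $k$, via $\delta_k > \delta/2$) to a finite set $\mathcal{F}$, pigeonholing to extract a subsequence with a constant $N_{k_t} = K$, and then passing to the limit in the condition $y^\dagger(\log\odot(P_{k_t}) - 2\pi i K)y \geq 0$, which requires choosing the branch of $\log$ so that no entry of $P$ lies on the branch cut. You instead compactify the parameters themselves: normalize $\alpha_1^{(k)} = 0$ via Lemma~\ref{lemma:GAlphaZero}, bound $\Vert \alpha_i^{(k)} \Vert_2^2 \leq -2\log(\delta/2)$ from the first-row moduli $\vert P^{(k)}_{1i}\vert = \exp(-\tfrac{1}{2}\Vert\alpha_i^{(k)}\Vert_2^2)$, apply Bolzano--Weierstrass, and pass to the limit through the map of Eq.~\ref{eq:ExpMaster1}, which is built from the exponential and so is continuous with no branch bookkeeping at all. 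Both arguments share the same germ --- $\delta := \min_{i,j}\vert P_{ij}\vert > 0$ yields a $k$-independent bound that enables a subsequence extraction --- but yours buys elementarity (no complex-logarithm branches, no pigeonhole over the $\exp(O(n^2))$-sized set $\mathcal{F}$) and in fact proves slightly more: you only ever use $P_{1i} \neq 0$ for all $i$, the non-vanishing of the remaining entries then being automatic in the limit since the bounded amplitudes force $\vert P_{ij}\vert \geq \exp(-\tfrac{1}{2}(\Vert\alpha_i\Vert_2 + \Vert\alpha_j\Vert_2)^2) > 0$. What the paper's approach buys in exchange is an explicit certificate of membership (the fixed integer matrix $K$) expressed directly in the language of its finite test, keeping the closure analysis and the characterization theorem in a single formalism. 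One cosmetic remark: your appeal to Corollary~\ref{corr:GOnlyN} is redundant, since $P^{(k)} \in G(\mathscr{C}_n^n)$ already means a Gram matrix of $n$-mode states by definition; it does no harm.
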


\begin{proof}
${P \in G(\mathscr{C}_n^n)}$ implies ${P \in \overline{G(\mathscr{C}_n^n)}}$ is trivial. For the other direction, we have that, if ${P \in \overline{G(\mathscr{C}_n^n)}}$, then there exists a sequence, ${\lbrace P_k \rbrace_{k=1}^{\infty} \subseteq G(\mathscr{C}_n^n)}$ such that ${\lim_k P_k = P}$ under the 2-norm (where the 2-norm, $\Vert \cdot \Vert_2$ is defined as ${\Vert X\Vert_2 = (\tr(X^\dagger X))^{1/2}}$; all norms are equivalent in a finite dimensional space \cite{Bollobas99}, so we can choose the 2-norm without loss of generality). Firstly, observe that ${P \in \text{Pos}(\mathbb{C}^n)}$ and that ${P_{ii}=1}$ for all ${i \in [n]}$, since these sets are closed. Moreover, this implies the existence of integer matrices, ${\lbrace N_k \rbrace_{k=1}^\infty \subset \mathbb{Z}^{n \times n}}$, for which each $(\log \odot (P_k) -2 \pi i N_k) \in \text{Herm}(\mathbb{C}^n)$, ${2 \pi i (N_k)_{ii} = \log (1)}$ for every $i$ and such that for all $y \in \mathbb{C}^n$, $\langle u,y \rangle = 0$ (we are using an alternate characterization of Theorem \ref{thm:GcharNew}; see Lemma \ref{lemma:Qchar}, Statement $\rom{5}$)
	\begin{align*}
		y^\dagger (\log\odot(P_k)-2\pi iN_k) y \geq 0.
	\end{align*}
	Further, we can assume that for all $i,j$:
	\begin{align*}
		(N_k)_{ij} \in \mathbb{Z} \cap \left[ \frac{\beta}{2\pi} - \frac{1}{\pi}\vert \log (\delta_k) \vert - 1, \frac{\beta}{2\pi} + \frac{1}{\pi}\vert \log (\delta_k) \vert +2 \right)
	\end{align*}
	where $\delta_k : = \min_{i,j} |(P_k)_{ij}|$, and the branch of the function $\log$ is chosen such that it is continuous at $P_{ij}$ for every $i,j \in [n]$. Before we proceed further, we show that for sufficiently large $k$ we can restrict the choice of matrices, $N_k$ to a finite set. \\
	\begin{claim}
		For the sequence, $\lbrace P_k \rbrace_{k=1}^\infty $ and integer matrices $\lbrace N_k \rbrace_{k=1}^\infty$ as described above, there exists $M \in \mathbb{N}$, such that for all $k \geq M$ and for all $i,j \in [n]$, we have
		\begin{align*}
			(N_k)_{ij} \in \mathbb{Z} \cap \left[ \frac{\beta}{2\pi} - \frac{1}{\pi}\vert \log (\frac{\delta}{2}) \vert - 1, \frac{\beta}{2\pi} + \frac{1}{\pi}\vert \log (\frac{\delta}{2}) \vert +2 \right),
		\end{align*}
		where $\delta :=  \min_{i,j} |P_{ij}|$. This essentially says that the range of elements of $N_k$ can be made independent of $k$. 
	\end{claim}
	\begin{claimproof}
		Since, $\delta \neq 0$ and the elements ${(P_k)_{ij} \rightarrow P_{ij}}$ for every $i,j \in [n]$, we can choose $M$ such that for all $k \geq M$ and $i,j \in [n]$
		\begin{align*}
			& \vert \ \vert (P_k)_{ij} \vert - \vert P_{ij} \vert  \ \vert < \frac{\delta}{2} \\
			& \Rightarrow\ \vert P_{ij} \vert - \frac{\delta}{2} < \vert (P_k)_{ij} \vert \\
			& \Rightarrow\ \delta - \frac{\delta}{2} < \vert (P_k)_{ij} \vert \\
			& \Rightarrow\ \frac{\delta}{2} < \vert (P_k)_{ij} \vert .
		\end{align*}
		Since, this is true for all $i,j \in [n]$, for every $k \geq M$ we have 
		\begin{align*}
			\frac{\delta}{2} < \min_{i,j} \vert (P_k)_{ij} \vert = \delta_k.
		\end{align*}
		Using the fact that $\vert \log(x)  \vert $ is decreasing for $x \in (0,1]$, we see that
				\begin{align*}
			\left[ \frac{\beta}{2\pi} - \frac{1}{\pi}\vert \log (\delta_k) \vert - 1, \frac{\beta}{2\pi} + \frac{1}{\pi}\vert \log (\delta_k) \vert +2 \right)
		\end{align*}
		is a subset of 
		\begin{align*}
			\left[ \frac{\beta}{2\pi} - \frac{1}{\pi}\vert \log (\frac{\delta}{2}) \vert - 1, \frac{\beta}{2\pi} + \frac{1}{\pi}\vert \log (\frac{\delta}{2}) \vert +2 \right).
		\end{align*}
		This proves the claim for $M$ as chosen above.
	\end{claimproof}
	\\
	Now, we only consider the infinite sequence of matrices, ${\lbrace P_k : k \geq M \rbrace}$, where $M$ is the number as defined in the claim above. Observe that the integer matrix corresponding to each of these matrices, $N_k$, is chosen from the finite set of matrices, 
	\begin{align*}
		\mathcal{F} := & \lbrace N \in \mathbb{Z}^{n \times n} | \text{ for all } i,j \in [n]: \\
		 & \frac{\beta}{2\pi} - \frac{1}{\pi}\vert \log (\frac{\delta}{2}) \vert - 1 \leq N_{ij} < \frac{\beta}{2\pi} + \frac{1}{\pi}\vert \log (\frac{\delta}{2}) \vert +2 \rbrace.
	\end{align*}
	We can write 
	\begin{align*}
		\mathcal{F} = \lbrace K_1, K_2,\ \cdots\ , K_l \rbrace
	\end{align*}
	for some $l \in \mathbb{N}$ to emphasize the fact that its finite. Since, this set is finite and the sequence $(N_k)_k$ is infinite, there exists a $p \in [l]$ such that $N_k = K_p$ for infinitely many $k$. Define $K:=K_p$ for convenience. We choose a subsequence of $\lbrace P_k : k \geq M \rbrace$, $\lbrace P_{k_t}\rbrace_{t=1}^\infty$ such that $N_{k_t} = K $ for all $ t \geq 1$. Also, as this is a subsequence of $\lbrace P_k \rbrace_k$, $P_{k_t} \rightarrow P$. Now, observe that for every $y \in \mathbb{C}^n$ such that $\langle u, y \rangle = 0$ and for all $ t \geq 1$,
	\begin{align*}
		y^\dagger (\log \odot (P_{k_t}) - 2 \pi i K) y \geq 0.
	\end{align*}
	This implies that for such a vector, $y$, 
	\begin{align*}
		& \lim_{t \rightarrow \infty} y^\dagger (\log \odot (P_{k_t}) - 2 \pi i K) y \geq 0 \\
		& \Rightarrow\ y^\dagger (\log \odot (\lim_{t \rightarrow \infty}(P_{k_t})) - 2 \pi i K) y \geq 0 \\
		& \Rightarrow\ y^\dagger (\log \odot (P) - 2 \pi i K) y \geq 0, 
	\end{align*}
	where we have used the fact that $K$ is a constant and the functions-- ${f_y (X) = y^\dagger X y}$, and ${f(X) = \log \odot X}$ are continuous. The $\log \odot X$ function is continuous because we chose the branch of the $\log$ function such that no element of $P$ lay on the branch cut. Further, $2 \pi i (K)_{ii} = \log (1)$ for every $i$, since $K = N_k$ for some $k$, and  $\log \odot (P) - 2 \pi i K = \lim_{t \rightarrow \infty} (\log \odot (P_{k_t}) - 2 \pi i K) \in \text{Herm} (\mathbb{C}^n)$, because the set of Hermitian matrices is closed. Using an equivalent characterization of Theorem \ref{thm:GcharNew}, this proves our claim.
\end{proof}

\noindent We need one final notion to characterize $\overline{G(\mathscr{C}_n^n)}$. Observe that if 
\begin{align*}
	& G(v_1, \cdots, v_n) \in G(\mathscr{C}_n^n) \text{, then} \\
	& G(v_{\pi^{-1}(1)}, \cdots, v_{\pi^{-1}(n)}) = P_\pi G(v_1, \cdots, v_n) P_\pi^\dagger \in G(\mathscr{C}_n^n), \numberthis
	\label{eq:rearrangement}
\end{align*}
for any permutation $\pi$ (where $P_\pi$ represents the permutation matrix associated with $\pi$). If one can construct a Gram matrix, $Q$ using coherent states, then all one needs to do to construct the Gram matrix, $P_\pi Q P_\pi^\dagger$ is to permute the order of the coherent states forming $Q$ by $\pi$. Therefore, ${Q \in G(\mathscr{C}_n^n)}$ is equivalent to ${P_\pi Q P_\pi^\dagger \in G(\mathscr{C}_n^n)}$. In fact, because of the isometric invariance \cite{Watrous18} of the 2-norm, ${Q \in \overline{G(\mathscr{C}_n^n)}}$ is equivalent to ${P_\pi Q P_\pi^\dagger \in \overline{G(\mathscr{C}_n^n)}}$. We can use this fact to simplify our analysis of matrices in $\overline{G(\mathscr{C}_n^n)}$. In the rest of the paper, we will refer to a Gram matrix of the form, $Q^\prime = G(v_{\pi^{-1}(1)}, v_{\pi^{-1}(2)}, \cdots, v_{\pi^{-1}(n)})$ as a rearrangement of the vectors forming the Gram matrix, $Q= G(v_1, v_2, \cdots, v_n)$.

\begin{thm} A matrix, $P \in \text{L}(\mathbb{C}^n)$, is a member of $\overline{G(\mathscr{C}_n^n)}$ if and only if $P$ can be written as 
	\begin{align}
		P_{\pi} P P_{\pi}^\dagger= \bigoplus_{i=1}^m P_i = 
		\begin{pmatrix}
			P_1 & & & \\
			& P_2 & & \\
			& & \ddots & \\
			& & & P_m
		\end{pmatrix}
		\label{eq:closureG}
	\end{align}
	where, ${\bigoplus_{i=1}^m P_i}$ represents a direct sum of matrices, ${\lbrace P_i \rbrace_{i=1}^m}$ and for each ${i \in [m]}$, ${P_i \in G(\mathscr{C}_{n_i}^{n_i})}$ for some $n_i \in \mathbb{N}$, and $P_\pi$ is a permutation matrix. \\
	
	\noindent In other words, $P \in \overline{G(\mathscr{C}_n^n)}$ if and only if up to a rearrangement of the vectors forming it, P can be written as a block-diagonal matrix where each block is a Gram matrix that can be realized by multi-mode coherent states.
	\label{thm:closureG}
\end{thm}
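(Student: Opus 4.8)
The plan is to prove the two implications separately. The ``if'' direction is a direct construction showing that a block-diagonal matrix with each block realizable by coherent states lies in the closure; the ``only if'' direction is the substantive part, where I extract the block structure from an approximating sequence by analysing which off-diagonal entries survive in the limit.

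For the ``if'' direction, I would first invoke the fact (established just before the theorem via isometric invariance of the $2$-norm) that membership in $\overline{G(\mathscr{C}_n^n)}$ is preserved under $P \mapsto P_\pi P P_\pi^\dagger$, so it suffices to treat $P = \bigoplus_{i=1}^m P_i$ directly. Given realizations $P_i = G(e^{i\phi^{(i)}_1}\mket{\alpha^{(i)}_1}, \ldots)$ of each block, I would append one extra mode and place every vector of block $i$ at amplitude $c_i$ in that mode, choosing the constants so that $|c_i - c_j| \to \infty$ for $i \neq j$. Within a block the shared extra amplitude cancels in every norm difference, so each $P_i$ is reproduced exactly; across distinct blocks, Eq.~\ref{eq:ExpMaster1} forces the moduli of the inner products to decay to zero as the separations grow. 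This yields a sequence in $G(\mathscr{C}_n^n)$ (the ambient mode count being immaterial by Corollary \ref{corr:GOnlyN}) converging to $\bigoplus_i P_i$, which gives the claim.

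For the ``only if'' direction, take a sequence $P^{(t)} \to P$ with $P^{(t)} \in G(\mathscr{C}_n^n)$, realized by coherent states with amplitude vectors $\alpha^{(t)}_1, \ldots, \alpha^{(t)}_n$. First I note that $P \in \text{Pos}(\mathbb{C}^n)$ and $P_{ii} = 1$, since both constraints define closed sets. The heart of the argument is to define the relation $i \sim j \iff P_{ij} \neq 0$ and to prove it is an equivalence relation. Reflexivity and symmetry are immediate from $P_{ii}=1$ and $P$ being Hermitian. Transitivity is exactly where Lemma \ref{lemma:bounds} enters: if $P_{ij} \neq 0$ and $P_{jk} \neq 0$, then the upper bound of Lemma \ref{lemma:bounds} applied with the common vertex $\alpha^{(t)}_j$ gives $\|\alpha^{(t)}_i - \alpha^{(t)}_k\|_2 \leq (-2\log|(P^{(t)})_{ij}|)^{1/2} + (-2\log|(P^{(t)})_{jk}|)^{1/2}$, whose right-hand side stays bounded because both moduli converge to nonzero limits; hence $|(P^{(t)})_{ik}| = \exp(-\tfrac12\|\alpha^{(t)}_i - \alpha^{(t)}_k\|_2^2)$ is bounded away from zero and $P_{ik} \neq 0$.

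Once transitivity is in hand, I would permute the indices to collect each equivalence class into a contiguous block; by construction all entries between distinct classes vanish, so $P_\pi P P_\pi^\dagger = \bigoplus_i P_i$ where each block $P_i$ has all entries nonzero. It remains to show $P_i \in G(\mathscr{C}_{n_i}^{n_i})$: restricting the sequence $P^{(t)}$ to the index set of block $i$ yields Gram matrices of $n_i$ coherent states converging to $P_i$, so $P_i \in \overline{G(\mathscr{C}_{n_i}^{n_i})}$ (again the mode count is immaterial by Corollary \ref{corr:GOnlyN}); since $P_i$ has no zero entries, Lemma \ref{lemma:NonZeroP} upgrades this to $P_i \in G(\mathscr{C}_{n_i}^{n_i})$, completing the proof. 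I expect the transitivity step to be the crux and the main obstacle — it is precisely what Lemma \ref{lemma:bounds} was proved for — whereas the displacement construction and the closing appeal to Lemma \ref{lemma:NonZeroP} are comparatively routine.
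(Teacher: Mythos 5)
Your proof is correct, and although it uses the same analytic ingredients as the paper --- Lemma \ref{lemma:bounds} applied along the approximating sequence, Lemma \ref{lemma:NonZeroP} to upgrade each block, Corollary \ref{corr:GOnlyN}, and permutation invariance --- the way you assemble the block structure is genuinely different. The paper's Step 1 proves \emph{two} propagation properties (Eqs.~\ref{eq:PijZero} and \ref{eq:PijNotZero}, using respectively the lower and the upper bound of Lemma \ref{lemma:bounds}), and its Step 2 then runs an induction on the matrix size: peel off the last row and column, apply the hypothesis to the leading $n \times n$ principal submatrix, and use both propagation properties to decide which block the $(n+1)$-th vector attaches to. You instead note that $i \sim j \iff P_{ij} \neq 0$ is an equivalence relation, with transitivity supplied by the upper bound of Lemma \ref{lemma:bounds} exactly as you describe, and the block decomposition falls out at once from partitioning $[n]$ into equivalence classes: cross-class entries vanish \emph{by definition} of $\sim$, and each block is handled by restriction plus Lemma \ref{lemma:NonZeroP}. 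This buys two simplifications over the paper: no induction at all, and no need for the zero-propagation property Eq.~\ref{eq:PijZero} (hence no use of the lower bound of Lemma \ref{lemma:bounds}), since in your framing that property is a formal contrapositive of transitivity. Your converse construction --- one shared extra mode carrying a block-dependent amplitude $c_i$ with $|c_i - c_j| \to \infty$ --- is precisely the displacement variant the paper mentions in passing before detailing its orthogonal-subspace version; the two are equivalent. One small point you should write out there: exact reproduction of each $P_i$ needs not only that the norm differences $\Vert \beta_u - \beta_v \Vert_2$ are unchanged but also that the phase data in Eq.~\ref{eq:ExpMaster2} is, which holds because the extra mode contributes $|c_i|^2 \in \mathbb{R}$ to $\langle \beta_u, \beta_v \rangle$ and so leaves $\text{Im}\lbrace \langle \beta_u, \beta_v \rangle \rbrace$ untouched; with that one-line addition the ``reproduced exactly'' claim is airtight.
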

\begin{proof}
	
	We will first prove that if $P \in \overline{G(\mathscr{C}_n^n)}$, then it has the aforementioned block diagonal form. This will be done in two steps. In the first step, we will establish two properties of elements of such a matrix, $P$. In the second step, which primarily relies on linear algebra, we will show that these two properties suffice to prove that the matrix, $P$, has the required block diagonal structure. \\
	
	\emph{Step 1:} For a ${P \in \overline{G(\mathscr{C}_n^n)}}$, and indices $i,k \in [n]$ such that ${P_{ik} \neq 0}$ we will prove that, if ${ j \in [n]}$ such that ${ P_{ij}=0}$, then ${P_{kj}=0}$ and if ${j \in [n]}$ such that ${ P_{ij} \neq 0}$, then ${ P_{kj}\neq 0}$. The idea is that if two multi-mode coherent states have a non-zero inner product then their amplitude vectors have to be a finite distance away from each other, but if their inner product approaches zero then the distance between these vectors has to grow infinitely large. \\
	
	\noindent We will first show that if $P_{ik} \neq 0$ and $j \in [n]$ such that ${ P_{ij}=0}$, then ${P_{kj}=P_{jk}=0}$. To prove this pick a sequence, ${\lbrace P_u \rbrace_{u=1}^{\infty} \subseteq G(\mathscr{C}_{n}^{n})}$ such that,
	\begin{align*}
		\lim_{u \rightarrow \infty} P_u = P.
	\end{align*}
	Here, we once again consider the 2-norm without any loss of generality. Observe that this is equivalent to 
	\begin{align*}
		\lim_{u \rightarrow \infty} \left( P_u \right)_{ab} = P_{ab}
	\end{align*}
	for all $a,b \in [n]$. Since, ${P_u \in G(\mathscr{C}_{n}^{n})}$, there exist multi-mode coherent states, ${\left\lbrace e^{i \phi_{ui}} \mket{\alpha^u_i} \right\rbrace_{i=1}^{n} \subseteq \mathscr{C}_{n}}$, such that ${\left( P_u \right)_{ab} = \left\langle e^{i \phi_{ua}} \mket{\alpha_a^u}, e^{i \phi_{ub}} \mket{\alpha_b^u} \right\rangle}$ for all $ a,b \in [n]$. Using Lemma \ref{lemma:bounds}, we have,
	\begin{align*}
		& \left( - 2 \log \vert \left( P_u \right)_{jk} \vert \right)^{1/2} = \Vert \alpha^u_{k} - \alpha^u_j \Vert_2 \\ 
		& \qquad \geq \left(-2 \log \vert \left( P_u \right)_{ij} \vert \right)^{1/2} - \left(-2 \log \vert \left( P_u \right)_{ik} \vert \right)^{1/2}.
		 \end{align*}
		 Taking the limit, $u \rightarrow \infty$, we have
		 \begin{align*}
		 \lim_{u \rightarrow \infty} \left( - 2 \log \vert \left( P_u \right)_{jk} \vert \right)^{1/2} & \geq \left(-2 \log \lim_{u \rightarrow \infty} \vert \left( P_u \right)_{ij} \vert \right)^{1/2} \\
		 & \qquad - \left(-2 \log \lim_{u \rightarrow \infty} \vert \left( P_u \right)_{ik} \vert \right)^{1/2} \\
		 & \geq \left(-2 \log \lim_{u \rightarrow \infty} \vert \left( P_u \right)_{ij} \vert \right)^{1/2} \\
		 & \qquad - \left(-2 \log \vert P_{ik} \vert \right)^{1/2},
	\end{align*}
	where we have used the continuity of $\vert \cdot \vert$, $\log(\cdot)$, and $(\cdot)^{1/2}$ functions. The limit on the RHS tends to $\infty$ since ${\lim_{u \rightarrow \infty} \vert \left( P_u \right)_{ij} \vert = \vert P_{ij} \vert= 0}$ and $P_{ik} \neq 0$, therefore
	\begin{align*}
			& \left( - 2 \log \lim_{u \rightarrow \infty} \vert \left( P_u \right)_{jk} \vert \right)^{1/2} = \infty \\
			& \Rightarrow \lim_{u \rightarrow \infty} \vert \left( P_u \right)_{jk} \vert = \vert P_{jk} \vert = 0.
	\end{align*}	 
	Hence, for $P_{ik} \neq 0$, and for every $ j \in [n]$ such that $P_{ij}=0$,
	\begin{align}
			 P_{kj}=P_{jk}=0. \label{eq:PijZero}
	\end{align}
	since, $P$ is Hermitian. \\
	
	\noindent Now, we will prove that given $P_{ik} \neq 0$, if for ${ j \in [n]}$ ${ P_{ij} \neq 0}$, then ${P_{kj}\neq 0}$. For $j$ such that $P_{ij} \neq 0$, using the upper bound given in Lemma \ref{lemma:bounds}, we have 
	\begin{align*}
		& \left( - 2 \log \vert \left( P_u \right)_{jk} \vert \right)^{1/2} = \Vert \alpha_{k}^u -\alpha_j^u \Vert_2 \\
		& \qquad \leq \left( -2 \log \left(\left\vert (P_u)_{ki} \right\vert  \right) \right)^{1/2} + \left( -2 \log \left(\left\vert (P_u)_{ij} \right\vert  \right) \right)^{1/2}.
	\end{align*}
	Taking the limit, $u \rightarrow \infty$, we have
	\begin{align*}
		\lim_{u \rightarrow \infty} \left( - 2 \log \vert \left( P_u \right)_{jk} \vert \right)^{1/2} & \leq \lim_{u \rightarrow \infty} \left( -2 \log \left(\vert (P_u)_{ki} \vert  \right) \right)^{1/2} \\
		& \qquad + \lim_{u \rightarrow \infty} \left( -2 \log \left(\vert (P_u)_{ij} \vert  \right) \right)^{1/2} \\
		& \leq \left( -2 \log \vert P_{ki} \vert \right)^{1/2} \\
		& \qquad + \left( -2 \log \vert P_{ij} \vert  \right)^{1/2} =: M < \infty 
	\end{align*}
	where we have defined $M$ to be the upper bound. This gives us
	\begin{align*}
		\vert P_{jk} \vert & \geq \exp \left( - \frac{1}{2} M^2 \right) > 0.
	\end{align*}
	Therefore, for ${ P_{ik} \neq 0}$ and ${j \in [n]}$ such that ${P_{ij} \neq 0}$, we have
	\begin{align}
		 P_{jk} = P_{kj}^\ast \neq 0.
		\numberthis
		\label{eq:PijNotZero}
	\end{align}	
	Therefore, we have proven that for any $n \in \mathbb{N}$, a matrix, ${P \in \overline{G(\mathscr{C}_n^n)}}$, and $i,j,$ and $k \in [n]$ Eq. \ref{eq:PijZero} and Eq. \ref{eq:PijNotZero} hold. These two properties are sufficient to establish the block diagonal structure of $P$. \\
	
	\emph{Step 2:} We will use induction on the size of the Gram matrices to prove the statement that if ${P \in \overline{G(\mathscr{C}_n^n)}}$, then $P$ has the block diagonal form given in Eq. \ref{eq:closureG}, up to a rearrangement of the vectors forming it. First observe that since the sets, ${\lbrace X \in \text{L}(\mathbb{C}^n) : X_{ii}=1 \text{ for all } i \in [n] \rbrace}$ and $\text{Pos}(\mathbb{C}^n)$ are closed, $P$ will belong in these sets. The induction hypothesis is clearly true for $n=1$ as $P=\left( 1 \right)$ is the only Gram matrix in this case and ${P = G(\mket{0})}$. We assume that our hypothesis is true for all $p \leq n$. For ${P \in \overline{G(\mathscr{C}_{n+1}^{n+1})}}$, $P$ can always be put into the form
	\begin{align*}
	P = 
		\begin{pmatrix}
		P^\prime & x \\
		x^\dagger & 1
		\end{pmatrix}
	\end{align*}
	where $x \in \mathbb{C}^n$. It can be shown that ${P^\prime \in \overline{G(\mathscr{C}_n^n)}}$. Since, ${P}$ is positive semidefinite, we can write ${P = G(v_1, v_2, \cdots, v_{n+1})}$ for vectors, ${\lbrace v_i \rbrace_{i=1}^{n+1}}$. Then, ${P^\prime = G(v_1, v_2, \cdots, v_{n})}$. By the induction hypothesis, there exists a permutation, $\pi$ such that 
	\begin{align}
	P_{\pi} P^\prime P_{\pi}^\dagger = G (v_{\pi^{-1}(1)}, \cdots, v_{\pi^{-1}(n)}) =
		\begin{pmatrix}
			P_1^\prime & & & \\
			& P_2^\prime & & \\
			& & \ddots & \\
			& & & P_m^\prime
		\end{pmatrix} 
		\label{eq:HypthBlockMatrix1}
	\end{align}
	where for every ${i \in [m]}$, ${P^\prime_i \in G(\mathscr{C}_{n_i}^{n_i})}$ for some $n_i$. We can transform $P$ as
	\begin{align*}
		P \longrightarrow 
		\begin{pmatrix}
			P_{\pi} & 0 \\
			0 & 1
		\end{pmatrix} P 
		\begin{pmatrix}
			P^\dagger_{\pi} & 0 \\
			0 & 1
		\end{pmatrix}
	\end{align*}
	and prove our claim for this matrix without loss of generality. So, from now on we will assume that $P$ is block diagonal in the first $n \times n $ entries. \\
	
	If for every ${ i \in [n]}$, ${P_{(n+	1)i} =0}$, then our matrix is already in the required block diagonal form. So, we will assume that there exists ${i \in [n]}$ such that $P_{(n+1)i} \neq 0$. Observe that the block structure of the first $n \times n$ entries divides the vectors forming the Gram matrix into orthogonal subspaces. So, we may associate a subspace with each Gram matrix, $P^\prime_l$ (Eq. \ref{eq:HypthBlockMatrix1}). Further assume without loss of generality that the vector, $v_i$ (where $P= G(v_1, v_2, \cdots, v_{n+1})$) is in the subspace associated with the Gram matrix, $P^\prime_m$ (if not one can always permute the vectors forming $P$ such that this is true). Then using the fact that for all $j$ such that ${P_{ij} = 0}$, ${P_{(n+1)j} =0}$ (Eq. \ref{eq:PijZero}), one can see that $P$ also has a block diagonal form if one includes the $(n+1)^\text{th}$ row and column in $P^\prime_m$ (See Fig. \ref{fig:closureProof} for a schematic representation of this fact). We will call this new last block $P_m$. All the other blocks remain the same. \\
	
	\begin{figure}[htb]
	\centering
		\includegraphics[scale=0.3]{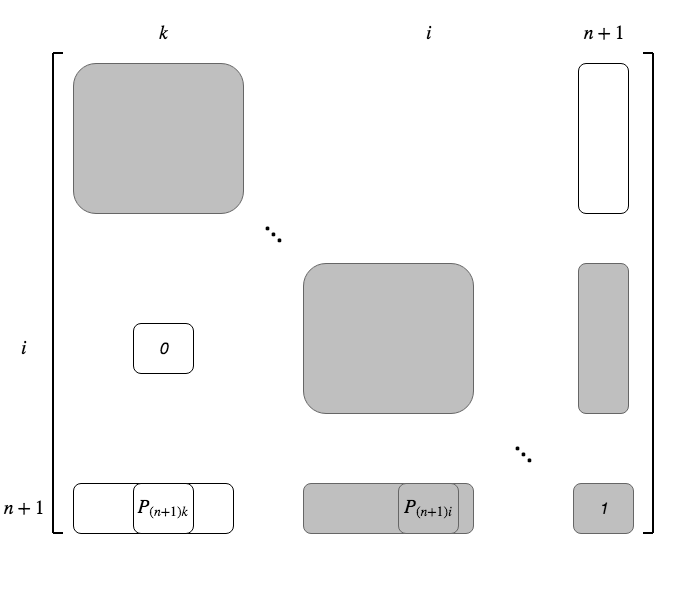}
		\caption{In this figure, we schematically represent the matrix, $P \in \overline{G(\mathscr{C}_{n+1}^{n+1})}$, which is diagonal in its first $n \times n$ entries. We consider the case where ${P_{(n+1)i} \neq 0}$. The facts proved in Step 1 of the proof show that the zero (white) and non-zero (gray) terms in the $i^\text{th}$-row (column) coincide with zero and non-zero terms respectively in the $(n+1)^\text{th}$-row (column). }		
		\label{fig:closureProof}
	\end{figure}		
	
	Furthermore, for every ${l \in [m-1]}$, ${P^\prime_l \in G(\mathscr{C}_{n_l}^{n_l})}$, and if we prove that the new block, ${P_m}$ belongs in ${ G(\mathscr{C}_{n_m+1}^{n_m+1})}$ then we would be done. Observe that ${P^\prime_m \in G(\mathscr{C}_{n_m}^{n_m})}$, which means that ${\left( P^\prime_m \right)_{uv} \neq 0}$ for all $u,v \in [n_m]$. In addition, using Eq. \ref{eq:PijNotZero}, we can show that for all ${u,v \in [n_m+1]}$, ${\left(P_{m} \right)_{uv} \neq 0}$. Moreover, ${P_m \in \overline{G(\mathscr{C}_{n_m+1}^{n_m+1})}}$. Using Lemma \ref{lemma:NonZeroP}, these two imply that ${P_m \in G(\mathscr{C}_{n_m+1}^{n_m+1})}$. Therefore, if $P \in \overline{G(\mathscr{C}_n^n)}$, then $P$ has the block diagonal form given in Eq. \ref{eq:closureG}, up to a rearrangement of the vectors forming it.\\
	
	We will prove the converse of the statement by construction. We will present a construction for Gram matrices with 2 blocks, which can be generalised to $m$ blocks easily. Suppose, we have ${P_1 \in G(\mathscr{C}_{n_1}^{n_1})}$ and ${P_2 \in G(\mathscr{C}_{n_2}^{n_2})}$, then we wish to prove that
	\begin{align*}
		P =
		\begin{pmatrix}
			P_1 & \\
			 &  P_2
		\end{pmatrix} \in G(\mathscr{C}_{n}^{n})
	\end{align*}
	for $n = n_1 + n_2$. \\
	 \noindent For $i=1,2$ let the multi-mode coherent states, ${\left\lbrace e^{i \phi_{ij}} \mket{\alpha^i_j} \right\rbrace_{j=1}^{n_i}}$ be such that
	\begin{align*}
		\left( P_i\right)_{uv}  = \left\langle e^{i \phi_{iu}} \mket{\alpha^i_u}, e^{i \phi_{iv}} \mket{\alpha^i_v}\right\rangle.
	\end{align*}
	There are at least two ways in which this can be accomplished. One way would be to put both the sets of amplitude vectors into the same space, say ${\mathcal{X} = \mathbb{C}^{n^\prime}} $ for ${{n^\prime = \max\lbrace n_1, n_2\rbrace}}$, and to displace one of the sets by a large amplitude vector, $A \in \mathcal{X}$. This way the inner products of the coherent states belonging to the same set of states remains invariant for appropriately defined phases, but the inner product of the coherent states belonging to different sets would tend to zero as the norm of the vector, $A$, tends to $\infty$. The second way, which we present here, is similar but it puts the amplitude vectors in different subspaces, and makes the distance between these subspaces go to $\infty$.\\
	
	We will define multi-mode coherent states $\left\lbrace e^{i \Phi_j}\mket{\beta_j (A)} \right\rbrace_{j=1}^{n}$ dependent on a parameter, $A \in \mathbb{R}$, such that their Gram matrix will approach $P$ as $A \rightarrow \infty$. Define,
	\begin{align*}
		& n := n_1 + n_2 
	\end{align*}
	and,
	\begin{align*}
		& \left\lbrace \beta_j (A) \right\rbrace_{j=1}^{n} \subset \left( \mathbb{C}^{n_1} \oplus \mathbb{C}^{n_2} \right) \oplus \left( \mathbb{C} \oplus \mathbb{C} \right) \\
		& \left\lbrace \Phi_j \right\rbrace_{j=1}^{n} \subset \mathbb{R}
	\end{align*}
	such that for each $j \in [n]$,
	\begin{align*}
		& \beta_j (A):= 
			\begin{cases}
				\alpha_j^1 \oplus 0 \oplus A \oplus 0 & j \leq n_1 \\
				0 \oplus \alpha_{j^\prime}^2\oplus 0 \oplus A & j^\prime = j- n_1 > 0 \\
			\end{cases} \\
		& \Phi_j := 
			\begin{cases}
				\phi_{1j} & j \leq n_1 \\
				\phi_{2j^\prime} & j^\prime = j- n_1 > 0
			\end{cases}
	\end{align*}
	Given these, one can check that the following hold for $u, v \in [n_1]$ 
	\begin{align*}
		& \Vert \beta_u (A) - \beta_v (A) \Vert^2_2 =  \Vert \alpha^1_{u} - \alpha^1_{v} \Vert^2_2 \\
		& \text{Im}\left\lbrace \langle \beta_u (A), \beta_v (A) \rangle \right\rbrace = \text{Im}\left\lbrace \langle \alpha^1_{u}, \alpha^1_{v} \rangle \right\rbrace \\
		& \Phi_u -\Phi_v = \phi_{1u} - \phi_{1v},
	\end{align*}
	whic imply that,
	\begin{align*}
		& \langle e^{i \Phi_u} \mket{\beta_u (A)}, e^{i \Phi_v } \mket{\beta_v (A)} \rangle = \langle e^{i \phi_{1{u}}} \mket{\alpha^1_{u}}, e^{i \phi_{1v}} \mket{\alpha^1_{v}} \rangle.
	\end{align*}
	Similar relations hold for the case when $u, v > n_1$, although one needs to replace $u$ with $u^\prime = u - n_1$ and $v$ with $v^\prime = v- n_1$ on the RHS of these equations. For $u \leq n_1$ and $v > n_1$, we have 
	\begin{align*}
	 	& \Vert \beta_u (A) - \beta_v (A) \Vert^2_2 = \Vert \alpha^1_{u}\Vert^2_2 + \Vert \alpha^2_{v^\prime} \Vert^2_2 +2A^2 \geq 2A^2 \\
	 	& \Rightarrow \vert \langle e^{i \Phi_u} \mket{\beta_u (A)}, e^{i \Phi_v } \mket{\beta_v (A)} \rangle \vert \leq \exp (-A^2) . \numberthis \label{eq:expConv}
	\end{align*}
	The matrix, ${P(A) = G(e^{i \Phi_1} \mket{\beta_1 (A)}, \cdots, e^{i \Phi_n} \mket{\beta_n (A)})}$ is a member of $ {G\left( \mathscr{C}_{n+2}^n \right)}$ and also of ${G\left( \mathscr{C}_{n}^n \right)}$ (by Corollary \ref{corr:GOnlyN}) for every $A \in \mathbb{R}$. For this family of matrices, we have that
	\begin{align*}
		\lim_{A \rightarrow \infty} P(A) = 
		\begin{pmatrix}
			P_1 & \\
			& P_2
		\end{pmatrix} \in \overline{G\left( \mathscr{C}_{n}^n \right)}.
	\end{align*}
	This together with the observation that ${P \in \overline{G(\mathscr{C}_n^n)} \Leftrightarrow P_\pi P P_\pi^\dagger \in \overline{G(\mathscr{C}_n^n)}}$ for a permutation matrix, $P_\pi$, completes our proof. 
	
\end{proof}
We have proven here that an $n \times n$ Gram matrix can be approximated arbitrarily well using Gram matrices of multi-mode coherent states if and only it can be put into the form of Eq. \ref{eq:closureG}. Moreover, we have also shown that if this is the case then we can approximate it using at most $n-$mode coherent states. During the proof of the converse of the theorem, we also suggest a way to construct such matrices, which would potentially require only $\max_i \lbrace{n_i}\rbrace$ number of modes (here $n_i$ are the dimensions of the block diagonal matrices in Eq. \ref{eq:closureG}). Secondly, one may wish to understand the energy requirements for approximating a Gram matrix up to an error, $\epsilon$, using coherent states. In the proof for the converse presented here, the overlap between the states of two blocks decreases exponentially fast in the number of additional photons (Eq. \ref{eq:expConv}). Therefore, we would only require $O(\log(1/ \epsilon))$ additional photons to approximate a block diagonal Gram matrix.\\

We use this characterization of the closure of $G(\mathscr{C}_n^n)$ to show that one cannot arbitrarily approximate Gram matrices of mutually unbiased bases using multi-mode coherent states. Recall that two orthonormal sets of vectors, $\lbrace \ket{e_i} \rbrace_{i=1}^n $ and $\lbrace \ket{f_i} \rbrace_{i=1}^n $ are said to be mutally unbiased \cite{Bengtsson07} if for every $i, j \in [n]$
\begin{align}
	\vert \langle \ket{e_i}, \ket{f_j} \rangle \vert = \frac{1}{n}.
	\label{eq:MUBCond}
\end{align}
In standard literature, these sets are bases of $ \mathbb{C}^n$. However, we relax this condition here and consider these to be sets of vectors in the infinite dimensional Fock space. 
\begin{example} \emph{Gram matrices of mutually unbiased bases cannot be arbitrarily approximated using multi-mode coherent states} \\
	Consider two mutually unbiased sets of vectors, ${\mathcal{E} = \lbrace \ket{e_i} \rbrace_{i=1}^n}$ and ${\mathcal{F}=\lbrace \ket{f_i} \rbrace_{i=1}^n}$ in the Fock space. $\mathcal{E}$ and $\mathcal{F}$ are orthonormal sets satisfying Eq. \ref{eq:MUBCond}. For these vectors, define ${P:=G(\ket{e_1}, \ket{e_2}, \ \cdots\ , \ket{e_n}, \ket{f_1}, \ket{f_2},\ \cdots\ ,\ket{f_n})}$. Suppose, ${P \in \overline{G(\mathscr{C}_{2 n}^{ 2 n})}}$, then using the fact that $P_{ki} \neq 0$ and $P_{ji} \neq 0$ implies $P_{kj} \neq 0$ for such matrices (equivalent to Eq. \ref{eq:PijNotZero} in Theorem \ref{thm:closureG}), we have that since $P_{1(n+1)} = \langle \ket{e_1}, \ket{f_1}\rangle \neq 0 $ and $P_{2(n+1)}  = \langle \ket{e_2}, \ket{f_1}\rangle \neq 0$, $P_{12} = \langle \ket{e_1}, \ket{e_2}\rangle  \neq 0 $, which is a contradiction. Hence, ${P \notin \overline{G(\mathscr{C}_{2 n}^{ 2 n})}}$.
\end{example}
\end{sloppypar}
\section{Conclusion}

In this paper, we have successfully characterized the set of Gram matrices of multi-mode coherent states and its closure. We provide tests to check if a Gram matrix belongs to either of these sets. We proved that no more than $(n-1)$-modes are required to represent a Gram matrix of $n$-vectors. These results will hopefully serve as a toolbox for formulating quantum protocols in terms of coherent states, and facilitate their experimental implementation. They also add to our theoretical knowledge of coherent states, and completely describe sets of states attainable from them. We also expect our results to be beneficial towards understanding the kind of quantum resources a communication protocol requires. 

\begin{acknowledgments}
We would like to thank Benjamin Lovitz and John Watrous for helpful discussions. We would also like to thank Benjamin Lovitz for pointing an error in an early version of the proof of Theorem \ref{thm:GcharNew}. We are also grateful to Matthias Kleinmann for independently pointing out an error in an earlier version of the paper. The work has been performed at the Institute for Quantum Computing, University of Waterloo, which is supported by Industry Canada. The research has been supported by NSERC under the Discovery Program, grant number 341495 and by the ARL CDQI program.
\end{acknowledgments}
\bibliographystyle{unsrt}
\bibliography{bib}

\end{document}